%
%
%
%
%
%
%
%
\documentclass[]{fundam}
\usepackage{amsmath,amssymb}
\usepackage{makeidx,epsfig,lscape}
\usepackage{color,colortbl}
\usepackage{fancyhdr}
\usepackage{xcolor,pict2e}

\usepackage[T1]{fontenc}
\usepackage[latin1]{inputenc}
\usepackage{pifont}
\usepackage{pslatex}
\usepackage[mathscr]{euscript}
\usepackage{latexsym}
\usepackage{stmaryrd}
\usepackage{amssymb}
\usepackage{amsmath}
\usepackage{graphicx}
\usepackage{pst-all}
\usepackage{verbatim}
\usepackage{fancyvrb}
\usepackage{mathrsfs}
\usepackage{ifsym}
\usepackage{algorithm}
\usepackage{algpseudocode}
\usepackage{mathtools}
\usepackage{listings}
\usepackage[framemethod=TikZ]{mdframed}
\usepackage{enumitem}
\usepackage{textcomp}
\usepackage{csquotes}
\usepackage{url}

\lstset{
	basicstyle=\ttfamily,
	mathescape
}

\let\oldComment=\Comment
\renewcommand{\Comment}[1]{\oldComment{{\scriptsize#1}}}
\algdef{SE}[DOWHILE]{DoWhile}{EndDoWhile}{\algorithmicdo}[1]{\algorithmicwhile\ #1}%

\mdfdefinestyle{MyFrame}{%
	linecolor=white,
	outerlinewidth=0pt,
	roundcorner=0pt,
	innertopmargin=4pt,
	innerbottommargin=4pt,
	innerrightmargin=4pt,
	innerleftmargin=4pt,
	leftmargin=4pt,
	rightmargin=4pt
}

\begin{document}

%
\setcounter{page}{1}
\publyear{2021}
\papernumber{0001}
\volume{178}
\issue{1}
%

\title{A Projection-Stable Grammatical Model for the Distributed Execution of Administrative Processes with Emphasis on Actors' Views}

\address{ndadjimaxime@yahoo.fr (Zekeng Ndadji Milliam Maxime)}

\author{Milliam Maxime \textsc{Zekeng Ndadji} $^{a,b}$ \and Maurice \textsc{Tchoup\'e Tchendji} $^{a,b}$ \and Cl\'ementin \textsc{Tayou Djamegni} $^{a}$ \and Didier \textsc{Parigot} $^{c}$\\
	$^{a}$ Department of Mathematics and Computer Science, 
	University of Dschang\\
	$^{b}$ FUCHSIA Research Associated Team, \url{https://project.inria.fr/fuchsia/}\\
	$^{c}$ Inria, Sophia Antipolis, France\\
	\{ndadji.maxime, maurice.tchoupe\}@univ-dschang.org, dtayou@yahoo.com, didier.parigot@inria.fr 
} 

\maketitle

\runninghead{M. Zekeng et al.}{A Grammatical Model for the Distributed Execution of Administrative Processes}

\begin{abstract}
  During the last two decades, the decentralized execution of business processes has been one of the main research topics in Business Process Management. Several models (languages) for processes' specification in order to facilitate their distributed execution, have been proposed. LSAWfP is among the most recent in this area: it helps to specify administrative processes with grammatical models indicating, in addition to their fundamental elements, the permissions (reading, writing and execution) of each actor in relation to each of their tasks. 
  In this paper, we present a model for a completely decentralized and artifact-centric execution of administrative processes specified using LSAWfP. The presented model puts particular emphasis on actors' views: it then allows the confidential execution of certain tasks by ensuring that, each actor potentially has only a partial perception of the processes' global execution states. The model thus solves a very important problem in business process execution, which is often sidelined in existing approaches. To accomplish this, the model rely on three projection algorithms allowing to partially replicate the processes' global execution states at a given moment, to consistently update the obtained partial states and to deduce new coherent global states. The proposal of these three algorithms, the proof of underlying mathematical tools' stability and a proposal of their implementation, are this paper's main contributions.
\end{abstract}

\begin{keywords}
Administrative Processes, Projection, Grammars, Views, LSAWfP
\end{keywords}

\section{Introduction}
\label{Introduction}

Administrative business processes are those of which all cases are known and predictable; that is to say, tasks sequencing rules are simple and clearly defined \cite{mcCready, van1998application}: they are the most frequently encountered in practice \cite{van2013business, dumas2018fundamental} (the peer review process \cite{van2001proclets, badouel14, zekeng2020alanguage, zekeng2020lsawfp, ndadji2020language}, the insurance claims process \cite{esparza2016reduction, cartledge2020system}, etc.). In its most widespread approach, Business Process Management (BPM) technology breaks down the automation of a given business process to its formal specification (modelling) in a \textit{workflow language} \cite{dumas2018fundamental}. The resulting process (workflow) model generally describes all the process's tasks, the control flow that link them (routing) and the actors in charge of executing them \cite{van2013business}. 

For the decentralized execution of an administrative process described in any workflow language, one can imagine a distributed \textit{Workflow Management System} (WfMS) made up of several reactive agents or peers, driven by human agents (actors in charge of executing tasks), coordinating with each other using an artifact that they cooperatively edit. In fact, the distributed execution of an administrative process is similar to the cooperative edition of a form that has to circulate from site to site (mobile form) in order to be edited by the different actors of the process. Upon its arrival on a site, the actor associated with the site must be able to examine it and deduce without ambiguity, the already edited fields (these correspond to the already executed tasks of the process), the fields that he must immediately edit (these correspond to the ready tasks that he must execute), and possibly, the sites to which he must return/redirect the form for further processing at the end of its edition.  It is easy to imagine that there could be forms with independent fields that need to be filled in by different actors. In this case, in order to speed up processing, it is acceptable that at a given time, there may be several replicas of the form that are simultaneously edited in the system.

A major characteristic of administrative processes is confidentiality: not all actors in an administrative process are necessarily aware of all processing and/or data generated in the process. It is therefore natural to assume that the form that is presented to an actor for editing on a site is only a potentially partial replica of the global form\footnote{The global form contains all the data already filled in so far by the various actors in the system. It therefore gives the process' global execution status at a given moment, by explicitly highlighting the fields already edited, those ready to be immediately edited, and those that will be edited later (in the case of dependencies between fields).}; this (partial) replica only contains information (relating to processing and data) that is of proven interest to the considered actor. Once a partial replica of a form has been received, it is essential to ensure that all editing actions on it can be consistently integrated into the global form. In order to satisfy this constraint, it is sufficient for each actor to have a local "supervisor" who must control his editing actions. The confidential execution of certain tasks as well as the access restriction to certain process data are generally only very rarely addressed in existing process management approaches. Most of the studies that do mention these aspects, either relegate them to second place or do not treat them formally like other aspects. In this study, the focus is on these concerns.

The form described in the two paragraphs above can be seen as a structured document (a tree) circulating from site to site, to be extended by cooperative editions made at the level of its leaves (positive edition\footnote{In a positive edition, no information is erased \cite{saito2005optimistic, shapiro2018Optimistic}. Editing actions on the document have the effect of making the tree representing it grow, by adding sub-trees at the level of its leaves.}). The nodes of this tree therefore represent either the tasks already executed, or those ready to be executed; moreover, the relations between nodes (father-son, brother-brother) correspond exactly to the ordering of tasks. This tree is called "\textit{artifact}" in the artifact-centric approach to business process modelling.

Based on the cooperative editing model of structured documents studied in the works of Badouel et al. \cite{badouelTchoupeCmcs, theseTchoupe, tchoupeAtemkeng2, tchoupezekeng2016reconciliation, tchoupeZekeng2017, zekengTchoupe2018}, we propose in this paper, a model for the distributed execution of administrative processes (cooperative edition of artifacts) that relies on algorithms to obtain : (1) partial replicas of the global artifact (\textit{artifact projection} algorithm); these contain only the information relevant to the considered actors; (2) local models that constrain local editing actions on local artifacts (\textit{model projection} algorithm), so as to ensure that these are always "expandable" as editing actions (updates) on the global artifact (\textit{expansion} algorithm). The proposed execution model applies only to process models obtained using the language LSAWfP (\textit{A Language for the Specification of Administrative Workflow Processes}) \cite{zekeng2020alanguage, zekeng2020lsawfp, ndadji2020language}. LSAWfP is a new language designed for the specification (using a variant of attributed grammars) of administrative processes with particular emphasis placed on the modelling (using views) of the perceptions that the various actors have on processes and their data. The specification of a particular process in this language is given by a triplet (\textit{a Grammatical Model of Administrative Workflow Process} - GMAWfP -) $\mathbb{W}_f = \left(\mathbb{G}, \mathcal{L}_{P_k}, \mathcal{L}_{\mathcal{A}_k} \right)$ wherein, $\mathbb{G}$ is the model of tasks and their sequencing, $\mathcal{L}_{P_k}$ and $\mathcal{L}_{\mathcal{A}_k}$ represent respectively the list of actors and their accreditations\footnote{The accreditation of a given actor provides information on its rights (permissions) relatively to each sort (task) of the studied process.}. If we consider a GMAWfP $\mathbb{W}_f = \left(\mathbb{G}, \mathcal{L}_{P_k}, \mathcal{L}_{\mathcal{A}_k} \right)$ to be executed in a decentralized manner, then the main contributions of this paper are as follows:

\noindent 1) The proposal of algorithms for:
\begin{itemize}
	\item \textit{artifacts' projection}; which, given an artifact $t$ that conforms to the (global) model $\mathbb{G}$ and the accreditation in reading $\mathcal{A}_{A_i}(r)$ (known as \textit{view} and denoted $\mathcal{V}_i$) of an actor $A_i$, allows to find its partial replica $t_{\mathcal{V}_i}$; 
	
	\item \textit{model's projection}; which permits, from a view $\mathcal{V}_i$ and a global model $\mathbb{G}$, to derive a local model $\mathbb{G}_{\mathcal{V}_i}$. $\mathbb{G}_{\mathcal{V}_i}$ will guide the actions carried out by a given actor on partial replicas from his site, in order to ensure consistency with respect to the global model $\mathbb{G}$.
	
	\item \textit{expansion-pruning}; which enable the inverse projection of a partial replica $t_{\mathcal{V}_i}^{maj}$ updated by a given actor $A_i$ according to a local GMWf $\mathbb{G}_{\mathcal{V}_i}$; the goal is to integrate the contributions made by the local actor into an artifact $t_{f}$ that conforms to the global model $\mathbb{G}$.
\end{itemize}
\noindent 2) A study of \textit{stability} properties of artifacts and their model, when using the proposed algorithms.

\noindent 3) A Haskell implementation of the proposed algorithms.

~

\noindent\textbf{\textit{Organization of the manuscript}}: in the remainder of this manuscript, we briefly present the LSAWfP language and an example of a process modelled using it (sec. \ref{sec:LSAWfP}). We then present the artifact-centric model of process execution considered in this paper, in order to motivate the need to propose stable projection algorithms (sec. \ref{sec:executionModel}). We continue by proposing versions of the three projection algorithms covered in this paper as well as a study of some of their properties (sec. \ref{sec:projectionAlgorithms}). We end with a discussion and a conclusion.

\section{On the Modelling and the Execution of Administrative Business Processes using LSAWfP}
\label{sec:Préliminaire}
Several tools have been developed to address process modelling. Among the most well-known are the BPMN standard (\textit{Business Process Model and Notation}) \cite{BPMN} which uses a formalism derived from that of statecharts, and the YAWL language (\textit{Yet Another Workflow Language}) \cite{van2005yawl, van2013business, van2015business} based on Petri nets. Despite the significant research progress around these workflow languages, they are often criticized for having a much too great expressiveness compared to the needs of professionals in the field \cite{zur2013much}, for not being based on solid mathematical foundations and/or for not being intuitive \cite{borger2012approaches}: This justifies the emergence of several other languages such as LSAWfP \cite{zekeng2020alanguage, zekeng2020lsawfp, ndadji2020language}. 
In this section, we present the LSAWfP language and its illustration: the specification of the administrative process used as a running example along this paper. The artifact-centric execution model of LSAWfP's specifications is also presented in order to motivate the current work.

\subsection{A Running Example: the Peer-Review Process}
\label{sec:runningExample}
The peer-review process is a commonly used example of business process to illustrate workflow languages \cite{van2001proclets, badouel14, zekeng2020alanguage, zekeng2020lsawfp, ndadji2020language}. We choose it in this manuscript because it is easy to describe and (that's the most important) it perfectly illustrates the concepts that we handle. The description that we consider is the same as the one in \cite{zekeng2020alanguage}: the process involves four actors (an editor in chief - $EC$ -, an associated editor - $AE$ - and two referees - $R1$ and $R2$ -) coordinating to evaluate a manuscript (paper) submitted for their review.
\begin{itemize}
	\item The process starts when the $EC$ receives the paper; 
	\item Upon receipt, the $EC$ pre-validates the paper (let us call this \textbf{task "$A$"}); after the pre-validation, he can accept or reject the paper for various reasons (subject of minor interest, paper not within the journal scope, non-compliant format, etc.);
	\item If he rejects the paper, he writes a report (\textbf{task "$B$"}) then notifies the corresponding author (\textbf{task "$D$"}) and the process ends;
	\item Otherwise, he chooses an $AE$ and sends him the paper;
	\item The $AE$ prepares the manuscript (\textbf{task "$C$"}) forms a referees committee (two members in our case) and then triggers the peer-review process (\textbf{task "$E$"});
	\item Each referee reads, seriously evaluates the paper (\textbf{tasks "$G1$"} and \textbf{"$G2$"}) and sends back a report (\textbf{tasks "$H1$"} and \textbf{"$H2$"}) and a message (\textbf{tasks "$I1$"} and \textbf{"$I2$"}) to the $AE$;
	\item After receiving reports from all referees, the $AE$ takes a decision and informs the $EC$ (\textbf{task "$F$"}) who sends the final decision to the corresponding author (\textbf{task "$D$"}).
\end{itemize}

\subsection{Process Modelling with LSAWfP}
\label{sec:LSAWfP}
LSAWfP is a recent language specifically designed for administrative process modelling. It relies on a variant of attributed grammars to provide a framework for the modelling of the main conceptual aspects of such processes: these are the aspects related to tasks' scheduling (the \textit{lifecycle model}), to data consumed and produced by tasks (the \textit{informational model}), and to resources in charge of executing tasks (the \textit{organizational model}) \cite{divitini2001inter}. In addition, LSAWfP puts a particular emphasis on the modelling (using views) of the perceptions that the various stakeholders have on processes and their data in order to guarantee confidentiality. 
To model a given process with LSAWfP, four key steps must be followed: (1) model the process scenarios using annotated trees and (2) derive from annotated trees, an abstract grammar which will be used as lifecycle model; then (3) identify the actors involved in the process execution and (4) establish the role played by each of them using a list of accreditations.

\subsubsection{Modelling Process Scenarios using Artifacts}
\label{sec:artifacts}
LSAWfP is founded on the principle that, by definition, all execution scenarios, all actors and the role they play in relation to tasks of a given administrative process $\mathcal{P}_{ad}$, are known in advance. LSAWfP therefore proposes to model each $\mathcal{P}_{ad}$'s execution scenario using an annotated tree $t_i$ called \textit{target artifact}. In such a tree, each node (labelled $X_i$) potentially corresponds to a task $X_i$ of $\mathcal{P}_{ad}$ and each hierarchical decomposition (a node and its sons) corresponds to a scheduling: the task associated with the parent node must be executed before those associated with the son nodes; the latter must be executed according to an order - parallel or sequential - that can be specified by particular annotations "$\fatsemi$" (is sequential to) and "$\parallel$" (is parallel to) which will be applied to each hierarchical decomposition. The annotation "$\fatsemi$" (resp. "$\parallel$") reflects the fact that the tasks associated with the son nodes of the decomposition must (resp. can) be executed in sequence (resp. in parallel).

For the running example (the peer-review process), there is only two execution scenarios: the one in which the $EC$ immediately rejects the paper and the one in which the paper goes through the validation process. These can be modelled using the two artifacts $art_1$ and $art_2$ in figure \ref{fig:artefactsGlobaux}. In particular, we can see that $art_1$ shows how the task "Receipt and pre-validation of a submitted paper" assigned to the $EC$, and associated with the symbol $A$ (see sec. \ref{sec:runningExample}), must be executed before tasks associated with the symbols $B$ and $D$ that are to be executed in sequence.

\begin{figure}[ht!]
	\centering
	\includegraphics[scale=0.36]{./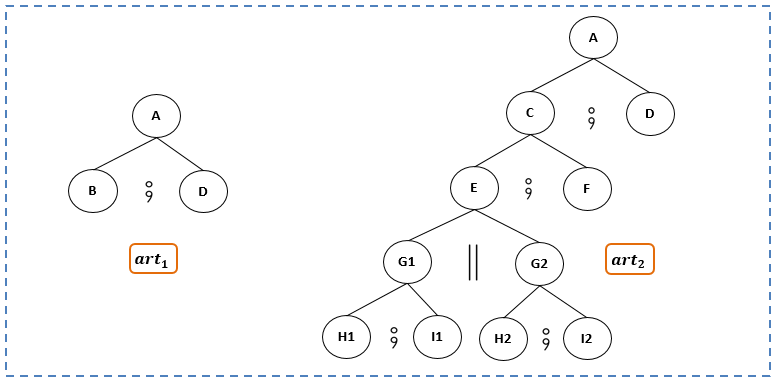}
	\caption{Example of target artifacts for a given process (peer-review process)}
	\label{fig:artefactsGlobaux}
\end{figure}

\subsubsection{Deducing the Grammatical Model of Workflow (GMWf)}
\label{sec:GMWf}
After modelling the scenarios of the studied process using target artifacts, LSAWfP suggests to extract from them, an abstract grammar called a \textit{Grammatical Model of Workflow (GMWf)}. This is done by simply substituting the set of target artifacts by a grammar $\mathbb{G}$ (a GMWf) in which, each symbol refers to a task and, each production $p$ is of one of the following two forms: $p: X_0 \rightarrow X_1 \fatsemi \ldots \fatsemi X_n$ or  $p: X_0 \rightarrow X_1 \parallel \ldots \parallel X_n$. These two forms of productions perfectly models the two types of ordering (parallel or sequential) retained in the design of the target artifacts. In this case, each target artifact $t_i$ \textit{is conform to} $\mathbb{G}$ and we note $t_i \therefore \mathbb{G}$; also the root symbols of the different target artifacts make up the set of axioms of $\mathbb{G}$. A GMWf can then be formally defined as follows:
\begin{definition}
	\label{defGMWf1}
	A \textbf{Grammatical Model of Workflow} (GMWf) is defined by $\mathbb{G}=\left(\mathcal{S},\mathcal{P},\mathcal{A}\right)$
	where :
	\begin{itemize}
		\item $\mathcal{S}$ is a finite set of \textbf{grammatical symbols} or \textbf{sorts} corresponding to various \textbf{tasks} to be executed in the studied business process; 
		\item $\mathcal{A}\subseteq \mathcal{S}$ is a finite set of particular symbols called \textbf{axioms}, representing tasks that can start an execution scenario (roots of target artifacts), and 
		\item $\mathcal{P}\subseteq\mathcal{S}\times\mathcal{S}^{*}$ is a finite set of \textbf{productions} decorated by the annotations "$\fatsemi$" (is sequential to) and "$\parallel$" (is parallel to): they are \textbf{precedence rules}. 
		A production $P=\left(X_{P(0)},X_{P(1)},\cdots X_{P(|P|)}\right)$ is either of the form $P: X_0 \rightarrow X_1 \fatsemi \ldots \fatsemi X_{|P|}$, or of the form $P: X_0 \rightarrow X_1 \parallel \ldots \parallel X_{|P|}$ and $\left|P\right|$ 
		designates the length of $P$ right-hand side.
		A production with the symbol $X$ as left-hand side is called a \textit{X-production}.
	\end{itemize}
\end{definition}

Considering the case of the peer-review process whose target artifacts are represented in figure \ref{fig:artefactsGlobaux}, the derived GMWf is  $\mathbb{G}=\left(\mathcal{S},\mathcal{P},\mathcal{A}\right)$ in which the set $\mathcal{S}$ of grammatical symbols is
$\mathcal{S}=\{A, B, C, D, E, F,$ $G1, G2, H1, H2, I1, I2\}$ (see sec. \ref{sec:runningExample});
the only initial task (axiom) is $A$ (then $\mathcal{A}=\{A\}$) and the set $\mathcal{P}$ of productions is:
\[ 
\begin{array}{l|l|l|l}
P_{1}:\; A\rightarrow B\fatsemi D & \; P_{2}:\; A\rightarrow C\fatsemi D\; & \; P_{3}:\; C\rightarrow E\fatsemi F\; & \; P_{4}:\; E\rightarrow G1\parallel G2    \\
P_{5}:\; G1\rightarrow H1 \fatsemi I1 & \; P_{6}:\; G2\rightarrow H2 \fatsemi I2\; & \; P_{7}:\; B\rightarrow \varepsilon\; & \; P_{8}:\; D\rightarrow \varepsilon  \\
P_{9}:\; F\rightarrow \varepsilon & \; P_{10}:\; H1\rightarrow \varepsilon & \; P_{11}:\; I1\rightarrow \varepsilon\; & \; P_{12}:\; H2\rightarrow \varepsilon  \\
P_{13}:\; I2\rightarrow \varepsilon &  &  &   \\
\end{array}
\]

\subsubsection{Identifying the Actors of the Process}
\label{sec:actors}
The identification of the actors taking part in the execution of the process is easily done with the help of its textual description. For example, according to the description of the peer-review process, four ($k=4$) actors participate in its execution: an editor in chief ($EC$), an associated editor ($AE$) and two referees ($R1$ and $R2$). So we deduce that the list of actors is $\mathcal{L}_{P_k}=\left\{EC, AE, R1, R2\right\}$.

\subsubsection{Establishing the List of Accreditations}
\label{sec:accreditations}
LSAWfP proposes a mechanism called \textit{accreditation}, inspired by the nomenclature of rights used in Unix-like systems, to ensure better coordination between actors and to eventually guarantee the confidentiality of certain actions and data. The accreditation of a given actor provides information on its rights (permissions) relatively to each sort (task) of the studied process's GMWf. There is three types of accreditation:

\noindent\textbf{1.}$~$ \textit{The accreditation in reading \textit{(r)}}: an actor accredited in reading on sort $X$ must have free access to its execution state (data generated during its execution). The set of sorts on which an actor is accredited in reading is called his \textbf{\textit{view}}. Any sort $X$ belonging to a given view $\mathcal{V}_i$ ($X \in \mathcal{V}_i$) is said to be \textit{visible}, and those not belonging to it are said to be \textit{invisible}.

\noindent\textbf{2.}$~$ \textit{The accreditation in writing \textit{(w)}}: an actor accredited in writing on sort $X$ can execute the associated task\footnote{Let's recall that the execution of a task is assimilated to the edition (extension) of a particular node in an artifact.}. Any actor accredited in writing on a sort is accredited in reading on it. 

\noindent\textbf{3.}$~$ \textit{The accreditation in execution \textit{(x)}}: an actor accredited in execution on sort $X$ is allowed to ask the actor who is accredited in writing in it, to execute it. 
More formally, an accreditation is defined as follows:
\begin{definition} 
	\label{defSyllabaire}
	An \textbf{accreditation} $\mathcal{A}_{A_i}$ defined on the set $\mathcal{S}$ of grammatical symbols for an actor $A_i$, is a triplet $\mathcal{A}_{A_i}=\left(\mathcal{A}_{A_i(r)},\mathcal{A}_{A_i(w)},\mathcal{A}_{A_i(x)}\right)$ such that, 
	$\mathcal{A}_{A_i(r)} \subseteq \mathcal{S}$ also called \textbf{view} of actor $A_i$, is the set of symbols on which $A_i$ is accredited in reading, 
	$\mathcal{A}_{A_i(w)} \subseteq \mathcal{A}_{A_i(r)}$ is the set of symbols on which $A_i$ is accredited in writing and  
	$\mathcal{A}_{A_i(x)} \subseteq \mathcal{S}$ is the set of symbols on which $A_i$ is accredited in execution.
\end{definition}

From the task assignment for the peer-review process in the running example, it follows that the accreditation in writing of the $EC$ is $\mathcal{A}_{EC(w)}=\{A, B, D\}$. 
Moreover, since he can only execute task $D$ if task $C$ (executed by the $AE$) is already executed (see artifacts $art_1$ and $art_2$, fig. \ref{fig:artefactsGlobaux}), he must be accredited in execution on $C$ to be able to request its execution; therefore, we have $\mathcal{A}_{EC(x)}=\{C\}$.
In addition, in order to be able to access all the information on the progress of the peer-review evaluation (task $C$) and synthesize the right decision to be returned, the $EC$ must be able to consult reports (tasks $I1$ and $I2$) and messages (tasks $H1$ and $H2$) of the referees, as well as the decision made by the $AE$ (task $F$). These tasks, in addition to $\mathcal{A}_{EC(w)}$\footnote{Remember: any actor accredited in writing on a sort is accredited in reading on it.} constitute the set $\mathcal{A}_{EC(r)}=\mathcal{V}_{EC}=\{A, B, C, D, H1, H2, I1, I2, F\}$ of tasks on which he is accredited in reading. Doing so for each of the other actors leads to the deductions of the accreditations represented in the table \ref{tableau:vuesActeurs} and we have $\mathcal{L}_{\mathcal{A}_k}=\left\{\mathcal{A}_{EC}, \mathcal{A}_{AE}, \mathcal{A}_{R1}, \mathcal{A}_{R2}\right\}$.
\begin{table}[ht]
	\centering
	\caption{Accreditations of the different actors taking part in the peer-review process}
	\begin{tabular}[t]{|p{4.2cm}|p{9.8cm}|}
		\hline
		\textbf{Actor} & \textbf{Accreditation} \\
		\hline
		Editor in Chief ($EC$) & $\mathcal{A}_{EC}=\left(\{A, B, C, D, H1, H2, I1, I2, F\}, \{A, B, D\}, \{C\}\right)$ \\
		\hline
		Associated Editor ($AE$) & $\mathcal{A}_{AE}=\left(\{A, C, E, F, H1, H2, I1, I2\}, \{C, E, F\}, \{G1, G2\}\right)$ \\
		\hline
		First referee ($R1$) & $\mathcal{A}_{R1}=\left(\{C, G1, H1, I1\}, \{G1, H1, I1\}, \emptyset\right)$ \\
		\hline
		Second referee ($R2$) & $\mathcal{A}_{R2}=\left(\{C, G2, H2, I2\}, \{G2, H2, I2\}, \emptyset\right)$ \\
		\hline
	\end{tabular}
	\label{tableau:vuesActeurs}
\end{table}

\noindent\textbf{Finally the Grammatical Model of Administrative Workflow Process}: 
with LSAWfP, the modelling of a process results in a triplet $\mathbb{W}_f = \left(\mathbb{G}, \mathcal{L}_{P_k}, \mathcal{L}_{\mathcal{A}_k} \right)$ (called \textit{a Grammatical Model of Administrative Workflow Process} - GMAWfP -) wherein, $\mathbb{G}$ is the GMWf, $\mathcal{L}_{P_k}$ is the list of actors and $\mathcal{L}_{\mathcal{A}_k}$ is the list of their accreditations.

\subsection{An Artifact-Centric Model for the Distributed Execution of GMAWfP}
\label{sec:executionModel}
The artifact-centric paradigm, emerged in the early 2000s, has become the most exploited current of thought for process modelling and execution (workflow management) over the last two decades \cite{van2013business}. Several works \cite{nigam2003business, abi2016towards, deutsch2014automatic, hull2009facilitating, lohmann2010artifact, assaf2017continuous, assaf2018generating, boaz2013bizartifact, badouel2015active} have been undertaken to develop this paradigm. According to it, workflow management focuses on both automated processes and data manipulated using the concept of \textit{business artifact}. A business artifact is considered as a document that conveys all the information concerning a particular case of execution of a given business process, from its inception in the system to its termination. In this section, we present an artifact-centric model for the distributed execution of GMAWfP inspired by the work of Badouel et al. on cooperative editing of structured documents \cite{badouelTchoupeCmcs, theseTchoupe, tchoupeAtemkeng2, tchoupeZekeng2017, zekengTchoupe2018}. We begin by presenting individually, the key concepts of the execution model, before examining the overall behaviour of the distributed system.

\subsubsection{Key Elements and Constraints of the Execution Model}
\textbf{\textit{The Execution Environment}}: to execute a given GMAWfP in a decentralized mode, we consider a completely decentralized (P2P) WfMS model (which we call \textit{P2P-WfMS-View}) whose instances (the peers) are installed on the sites of the various actors involved in processes execution. During the process execution, these peers communicate (sending/receiving requests/responses) by exchanging copies of a  (global) artifact said to be under execution. Such an artifact provides information on already executed tasks and on those ready to be executed.

As in the work of Badouel et al. \cite{badouelTchoupeCmcs, theseTchoupe, tchoupeAtemkeng2, tchoupeZekeng2017, zekengTchoupe2018}, we represent an artifact under execution by a tree (a structured document) that contains \textit{buds}. These indicate at a moment, the only places where contributions are expected. A bud can be either \textit{unlocked} or \textit{locked} depending on whether the corresponding task (node) is ready to be executed (edited) or not. Buds are typed; a \textit{bud of type $X \in \mathcal{S}$} is a leaf node labelled either by $X_{\overline{\omega}}$ or by $X_\omega$ depending on its state (\emph{locked} or \emph{unlocked}) (see fig. \ref{figDocBourgeons}). The local actions of a given actor will therefore have the effect of extending (editing) its received copy of the (global) artifact by developing, through appropriate productions, the different unlocked buds it contains.
\begin{figure}[ht!]
	\begin{center}
		\includegraphics[scale=0.24]{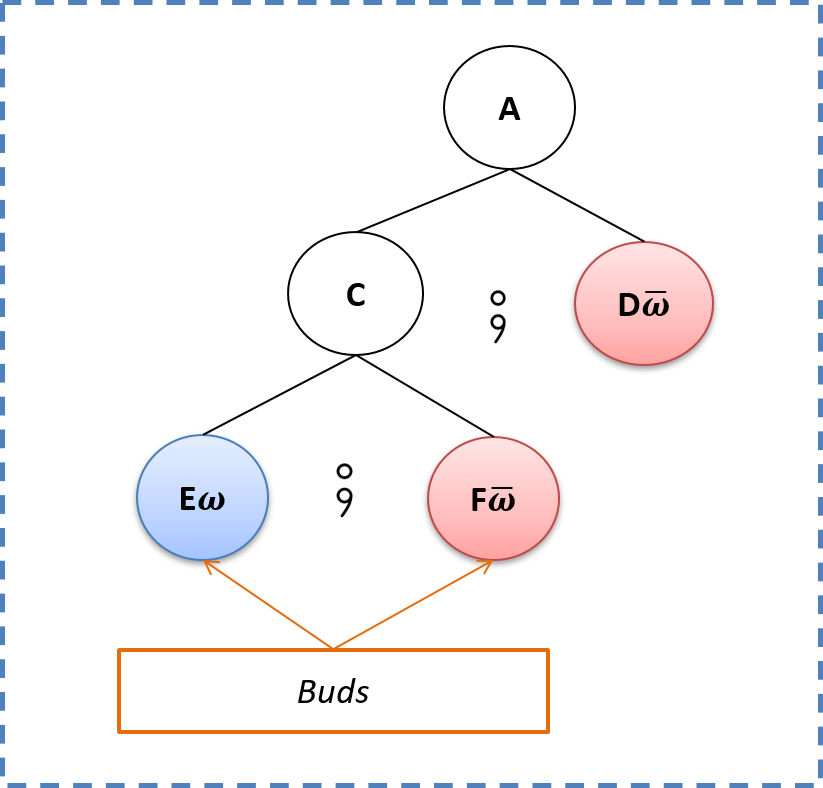}
		\vspace*{-0.4cm}
		\caption{An intentional representation of an annotated artifact containing buds.}
		\label{figDocBourgeons}
	\end{center}
\end{figure}

\textbf{\textit{The Confidential Execution of Certain Tasks}}: for confidentiality reasons, each actor acts on a potentially partial replica of the local copy of the (global) artifact: this partial replica contains only the information to which the concerned actor can have access. Technically, a partial replica $t_{\mathcal{V}_i}$ of an artifact $t$ is obtained by projection (using an operator $\pi$ said of \textbf{\textit{artifact projection}}) of $t$ according to the view $\mathcal{V}_i$ of the concerned actor: we note $t_{\mathcal{V}_i}=\pi_{\mathcal{V}_i}\left(t\right)$. 

\textbf{\textit{The Need of a Local GMWf at each Site}}: since the local actions of a particular actor depend on his perception of the process, it is necessary to control them in order not only to preserve the possible confidentiality of certain tasks, but also to ensure the consistency of local updates. To do this, one must derive a local GMWf on each site, by projecting the global GMWf according to the view of the local actor (\textbf{\textit{GMWf projection}}). This projection is carried out using $\Pi$ operator and the GMWf obtained is noted $\mathbb{G}_{\mathcal{V}_i}=\Pi_{\mathcal{V}_i}\left(\mathbb{G}\right)$. 

\textbf{\textit{The Expansion Operation}}: still with the aim of ensuring system convergence, the contributions made by a given actor and contained in an updated partial replica $t_{\mathcal{V}_i}^{maj}$, must be integrated into the local copy of the (global) artifact before any synchronization between peers. It is therefore necessary to be able to merge these two artifacts, which are based on two different models. We find here, a version of the \textbf{\textit{expansion}} problem as formulated in \cite{badouelTchoupeCmcs}.

\subsubsection{Execution Model and Peer Activity}
\label{sec:peerActivity}
Globally then, before the execution of a given process, peers are configured using its GMAWfP $\left(\mathbb{W}_f=\left(\mathbb{G}, \mathcal{L}_{P_k}, \mathcal{L}_{\mathcal{A}_k} \right)\right)$. From the global GMWf $\mathbb{G}$ and the view $\mathcal{V}_i$ of the local actor, each peer derives by projection, a local GMWf $\mathbb{G}_{\mathcal{V}_i}=\Pi_{\mathcal{V}_i}\left(\mathbb{G}\right)$.
Then, the execution of a process case is triggered when an artifact $t$ is introduced into the system (on the appropriate peer); this artifact is actually an unlocked bud of the type of one axiom $A_\mathbb{G} \in \mathcal{A}$ (initial task) of the (global) GMWf $\mathbb{G}$ (see fig. \ref{fig:systemOverview}). 
During execution, peers synchronize themselves by exchanging their local copies of the artifact being executed.

After receiving an artifact $t \therefore \mathbb{G}$ on a given peer, the latter projects it (see Peer $i$ in fig. \ref{fig:systemOverview}) according to the local view $\mathcal{V}_i$. The obtained partial replica $t_{\mathcal{V}_i} \therefore \mathbb{G}_{\mathcal{V}_i}$  is then completed (edited) when needed: the result of this edition is an artifact $t_{\mathcal{V}_i}^{maj} \therefore \mathbb{G}_{\mathcal{V}_i}$ such as $t_{\mathcal{V}_i}^{maj}$ is an update of $t_{\mathcal{V}_i}$ ($t_{\mathcal{V}_i}^{maj} \geq t_{\mathcal{V}_i}$).

At the end of the completion, the expansion-pruning of the obtained updated partial replica $t_{\mathcal{V}_i}^{maj} \therefore \mathbb{G}_{\mathcal{V}_i}$ is made in order to obtain the updated configuration $t_f \therefore \mathbb{G}$ of the (global) artifact local copy (see Peer $i$ in fig. \ref{fig:systemOverview}). If the resulting configuration shows that the process should be continued at other sites\footnote{This is the case when the artifact contains buds created on the current peer and whose actors accredited in writing are on distant peers.}, then replicas of the artifact are sent to them. Else\footnote{The artifact is complete (it no longer contains buds), or semi-complete (it contains buds that were created on other peers and on which, the actor on the current peer is not accredited in writing).}, a replica is returned to the peer from whom the artifact was previously received.
\begin{figure}[ht!]
	\noindent
	\makebox[\textwidth]{\includegraphics[scale=0.15]{./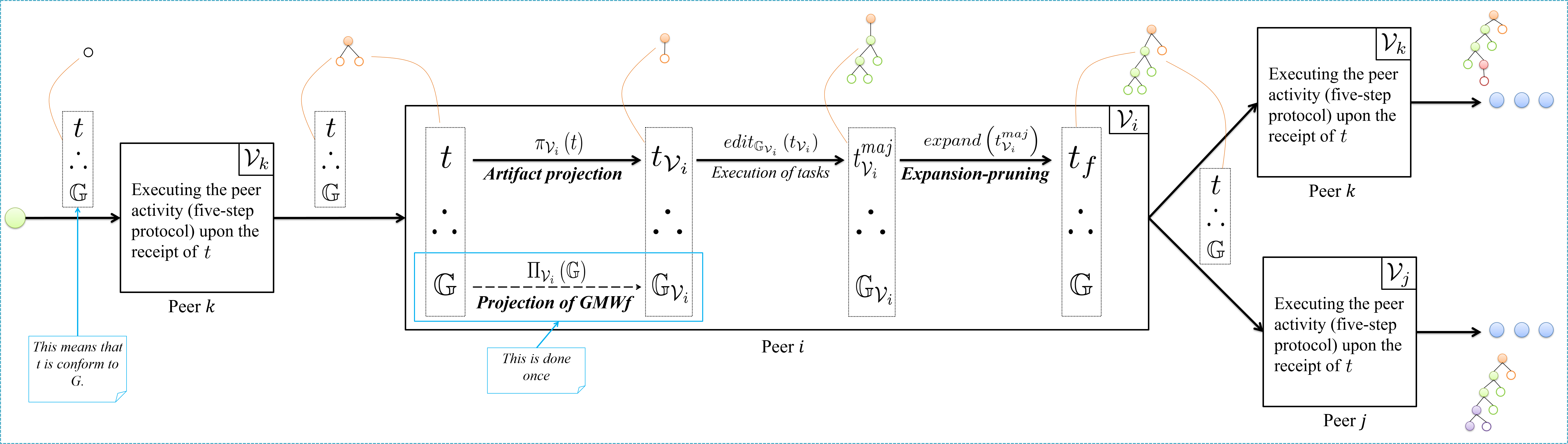}}
	\caption{Overview of the distributed execution of a given process.}
	\label{fig:systemOverview}
\end{figure}

\section{Projection Algorithms for the Distributed Execution of GMAWfP}
\label{sec:projectionAlgorithms}
The GMAWfP execution model is mainly based on three algorithms: \textit{artifact projection}, \textit{GMWf projection} and \textit{expansion}. In this section, we propose versions of these algorithms as well as a study of some of their mathematical properties guaranteeing the correction of the execution model, and the coherence of the distributed system formed by the peers in charge of the execution of a given GMAWfP.

\subsection{The Artifact Projection Algorithm}
\subsubsection{The Algorithm}
Technically, the projection $t_{\mathcal{V}_i}$ of an artifact $t$ according to the view $\mathcal{V}_{i} = \mathcal{A}_{A_i(r)}$ is obtained by deleting in $t$ all nodes whose types do not belong to $\mathcal{V}_{i}$ (all invisible nodes). In our case, the main challenges in this operation are:
\begin{enumerate}
	\item[\textbf{(1)}] nodes of $t_{\mathcal{V}_i}$ must preserve the previously existing execution order between them in $t$,
	\item[\textbf{(2)}] $t_{\mathcal{V}_i}$ must be build by using exclusively the only two forms of production retained for GMWf and
	\item[\textbf{(3)}] $t_{\mathcal{V}_i}$ must be unique in order to ensure the continuation of process execution (see sec. \ref{sec:peerActivity}).
\end{enumerate}

The projection operation is noted $\pi$. Inspired by the one proposed in \cite{badouelTchoupeCmcs}, it projects an artifact by preserving the hierarchy (father-son relationship) between nodes of the artifact (it thus meets challenge \textbf{(1)}); but in addition, it inserts into the projected artifact when necessary, new additional \textit{(re)structuring symbols} (accessible in reading and writing by the agent for whom the projection is made). This enables it to meet challenge \textbf{(2)}. The details of how to accomplish the challenge \textbf{(3)} are outlined immediately after the algorithm (algorithm \ref{algo:artifact-projection}) is presented.
\begin{figure}[ht!]
	\noindent
	\makebox[\textwidth]{\includegraphics[scale=0.24]{./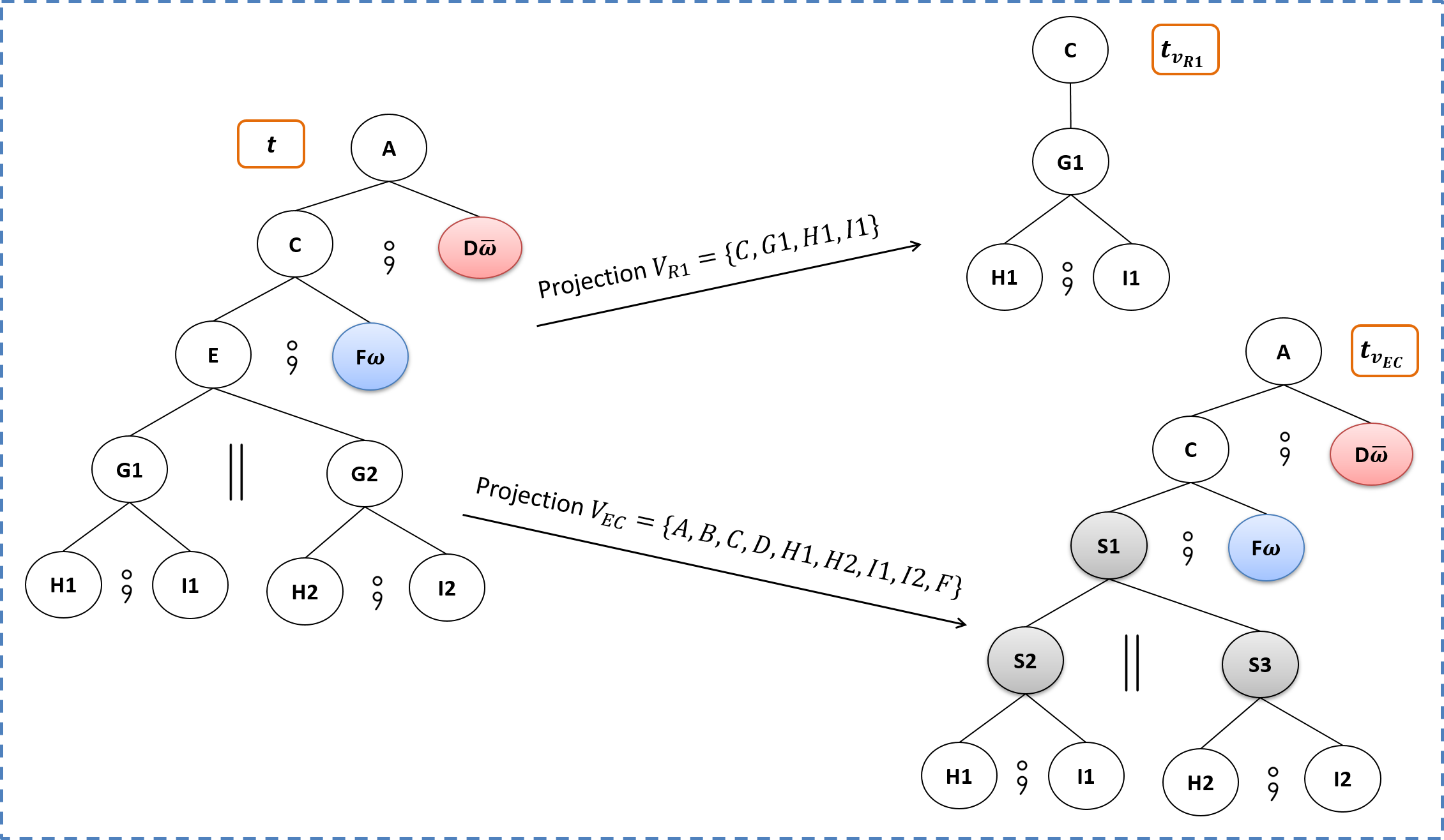}}
	\caption{Example of projections made on an artifact and partial replicas obtained.}
	\label{fig:partial-replicas}
\end{figure}

Figure \ref{fig:partial-replicas} illustrates the projection of an artifact of the peer-review process relatively to the $R1$ (first referee) and $EC$ (Editor in Chief) views. Note the presence in $t_{\mathcal{V}_{EC}}$ of new (re)structuring symbols (in gray). These last ones make it possible to avoid introducing in $t_{\mathcal{V}_{EC}}$, the production $p: C \rightarrow H1 \fatsemi I1 \parallel H2 \fatsemi I2 \fatsemi F$ whose form does not correspond to the two forms of production retained for the GMWf writing\footnote{Note that this production specifies in its right-hand side that we must have parallel and sequential treatments.
Inserting $S1$, $S2$ and $S3$ allows to rewrite $p$ in four productions $p1: C \rightarrow S1 \fatsemi F$, $p2: S1 \rightarrow S2 \parallel S3$, $p3: S2 \rightarrow H1 \fatsemi I1$ and $p4: S3 \rightarrow H2 \fatsemi I2$.}.

Let's consider an artifact $t$ and note by $n=X\left[t_1,\ldots,t_m\right]$ a node of $t$ labelled with the symbol $X$ and having $m$ sub-artifacts $t_1,\ldots,t_m$. Note also by $p_n$, the production of the GMWf that was used to extend node $n$; the type of $p_n$ is either \textit{sequential} (i.e. $p_n$ is of the form $p_n: X \rightarrow X_1 \fatsemi \ldots \fatsemi X_m$ where $X_1,\ldots,X_m$ are the roots of the sub-artifacts $t_1,\ldots,t_m$) or \textit{parallel} ($p_n: X \rightarrow X_1 \parallel \ldots \parallel X_m$). 
Concretely, to project $t$ according to a given view $\mathcal{V}$ (i.e to find $\mathit{projs_t}=\pi_{\mathcal{V}}\left(t \right)$), a depth path of $t$ is performed and invisible nodes are erased or replaced by new nodes associated with (re)structuring symbols to preserve the subtree structure. To do so, the recursive processing presented in algorithm \ref{algo:artifact-projection}, is applied to the root node $n=X\left[t_1,\ldots,t_m\right]$ of $t$.

\begin{algorithm}
	\small
	\caption{Algorithm to project a given artifact according to a given view.}
	\label{algo:artifact-projection}
	\begin{mdframed}[style=MyFrame]
		{\large\textbullet} $~$ \textbf{If symbol $X$ is visible ($X \in \mathcal{V}$)} then :
		
		\textbf{1.}$~$ $n$ is kept in the artifact;
		
		\textbf{2.}$~$ For each sub-artifact $t_i$ of $n$, having node $n_i=X_i\left[t_{i_1},\ldots,t_{i_k}\right]$ as root (of which $p_{n_i}$ is the production that was used to extend it), the following processing is applied :
		
		$~~$\textbf{a.}$~$ The projection of $t_i$ according to $\mathcal{V}$ is done. We obtain the list $\mathit{projs_{t_i}} = \pi_{\mathcal{V}}\left(t_i \right) = \left\{t_{i_{\mathcal{V}_1}},\ldots,t_{i_{\mathcal{V}_l}}\right\}$;
		
		$~~~~$\textbf{b.}$~$ If the type of $p_{n_i}$ is the same as the type of $p_{n}$ or the projection of $t_i$ has produced no more than one artifact ($\left|\mathit{projs_{t_i}}\right| \leq 1$), we just replace $t_i$ by artifacts $t_{i_{\mathcal{V}_1}},\ldots,t_{i_{\mathcal{V}_l}}$ of the list $\mathit{projs_{t_i}}$;
		
		$~~~~$ Otherwise, a new (re)structuring symbol $S_i$ is introduced and we replace the sub-artifact $t_i$ with a new artifact $new\_t_i$ whose root node is $n_{t_i}=S_i\left[t_{i_{\mathcal{V}_1}},\ldots,t_{i_{\mathcal{V}_l}}\right]$;
		
		\textbf{3.}$~$ If the list of new sub-artifacts of $n$ contains only one element $t_1$ having $n_1=S_1\left[t_{1_{\mathcal{V}_1}},\ldots,t_{1_{\mathcal{V}_l}}\right]$ (with $S_1$ a newly created (re)structuring symbol) as root node, we replace in this one, $t_1$ by the sub-artifacts $t_{1_{\mathcal{V}_1}},\ldots,t_{1_{\mathcal{V}_l}}$ of $n_1$. This removes a non-important (re)structuring symbol $S_1$.
		
		\noindent{\large\textbullet} $~$ \textbf{Else}, $n$ is deleted and the result of the projection ($\mathit{projs_t}$) is the union of the projections of each of its sub-artifacts: $\mathit{projs_t} = \pi_{\mathcal{V}}\left(t \right) = \bigcup^{m}_{i=1} \pi_{\mathcal{V}}\left(t_i \right)$
	\end{mdframed}
\end{algorithm}

Note that the algorithm described here applies to all artifacts (including those containing buds) because there is no need to apply any special treatment (locking or unlocking) on buds: they must also be just erased or kept in the artifact to ensure consistency of the execution model. Note also that, this algorithm can return several artifacts (a forest). To avoid that it produces a forest in some cases and thus meet challenge \textbf{(3)}, we make the following assumption: 
\begin{displayquote}
\textit{GMAWfP manipulated in this work are such that all actors are accredited in reading on the GMWf axioms (\textbf{axioms' visibility assumption}).}
\end{displayquote}
The designer must therefore ensure that all actors are accredited in reading on all GMWf axioms. To do this, after modelling a process $\mathcal{P}_{ad}$ and obtaining its GMWf $\mathbb{G}=\left(\mathcal{S},\mathcal{P},\mathcal{A}\right)$, it is sufficient (if necessary) to create a new axiom $A_{\mathbb{G}}$ on which, all actors will be accredited in reading, and to associate it with new unit productions\footnote{A production of a context free grammar is a \textit{unit production} if it is on the form $A \rightarrow B$, where $A$ and $B$ are non-terminal symbols.} $pa : A_{\mathbb{G}} \rightarrow X_a$ where, $X_a \in \mathcal{A}$ is a symbol labelling the root of a target artifact.
Moreover, the designer of the GMWf must statically choose the actor responsible for initiating the process. This actor will therefore be the only one to possess an accreditation in writing on the new axiom $A_{\mathbb{G}}$.

\subsubsection{Studying the Stability Property with the Artifact Projection Algorithm}
\begin{proposition}
	\label{propositionStabiliteProjArt}
	For all GMAWfP $\mathbb{W}_f=\left(\mathbb{G}, \mathcal{L}_{P_k}, \mathcal{L}_{\mathcal{A}_k} \right)$ verifying the axioms' visibility assumption, the projection of an artifact $t$ which is conform to its GMWf ($t \therefore \mathbb{G}$) according to a given view $\mathcal{V}$, results in a single artifact $t_{\mathcal{V}}=\pi_{\mathcal{V}} \left(t\right)$ (stability property of artifacts through the usage of $\pi$).
\end{proposition}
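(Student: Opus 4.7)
The plan is to separate two concerns: first, to understand under which conditions Algorithm~\ref{algo:artifact-projection} is capable of returning more than one artifact (i.e., a proper forest), and second, to use the axioms' visibility assumption to rule out this pathological case at the top-level call on $t \therefore \mathbb{G}$.

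For the first step, I would examine the algorithm's two branches as a case analysis on the label $X$ at the root of the processed artifact. In the ``visible'' branch ($X \in \mathcal{V}$), the node $n$ is always kept, and each of its sub-artifacts is replaced either by the (possibly multi-element) list resulting from recursive projection (step 2.b, first sub-case) or by a single new node carrying a fresh restructuring symbol whose children are that list (step 2.b, second sub-case); step 3 then only collapses a single-child restructuring root, which preserves the single-tree property. Hence the output of the visible branch is always a \emph{single} artifact rooted at (a node derived from) $n$. The only place where the algorithm can produce a list of length strictly greater than one is the ``Else'' branch, where $\mathit{projs_t} = \bigcup_{i=1}^{m} \pi_{\mathcal{V}}(t_i)$; this happens exactly when the root label $X$ is invisible.

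For the second step, I would observe that since $t \therefore \mathbb{G}$, the label $A$ of the root of $t$ belongs to the set $\mathcal{A}$ of axioms of $\mathbb{G}$. The axioms' visibility assumption imposes that every actor is accredited in reading on every axiom, i.e.\ $\mathcal{A} \subseteq \mathcal{V}_i$ for every view $\mathcal{V}_i$. In particular, $A \in \mathcal{V}$, so the top-level call to $\pi_{\mathcal{V}}$ on $t$ enters the visible branch, and by the analysis of step 1 returns a single artifact $t_{\mathcal{V}}$. A short structural induction on the height of $t$ makes the argument rigorous: the inductive hypothesis allows recursive calls on sub-artifacts to return forests of arbitrary (finite) cardinality, which is harmless because the visible branch reassembles them under $n$ (with restructuring symbols when a parallel/sequential type mismatch would otherwise violate the GMWf production shapes).

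The main obstacle I anticipate is not the forest/single-tree dichotomy itself, which follows cleanly from inspection, but rather the bookkeeping in step 2.b showing that introducing the restructuring symbol $S_i$ whenever the type of $p_{n_i}$ differs from that of $p_n$ (and $|\mathit{projs_{t_i}}|>1$) is both necessary and sufficient to keep the resulting tree well-formed with respect to the two admissible production shapes of a GMWf; this is needed to ensure the ``single artifact'' obtained is really an artifact in the intended sense, not merely a tree. Once this is verified, the conclusion $t_{\mathcal{V}} = \pi_{\mathcal{V}}(t)$ is a unique, well-defined artifact is immediate.
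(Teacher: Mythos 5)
Your argument follows essentially the same route as the paper's own proof: observe that the algorithm can only return a forest when the root label is invisible, then use $t \therefore \mathbb{G}$ together with the axioms' visibility assumption to conclude the root (an axiom) is visible, and finally appeal to the fact that the projection preserves the admissible production forms so the single resulting tree is indeed an artifact. Your version is in fact somewhat more explicit than the paper's (the branch-by-branch analysis of the algorithm and the suggested induction), and the remaining bookkeeping you flag about the restructuring symbols is treated no more rigorously in the paper, which simply asserts form preservation.
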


\begin{proof}
	Let's show that $\pi_{\mathcal{V}} \left(t\right)$ produces a single tree $t_{\mathcal{V}}$ which is an artifact. 
	Note that the only case in which the projection of an artifact $t$ according to a view $\mathcal{V}$ produces a forest, is when the root node of $t$ is associated with an invisible symbol $X$ ($X \notin \mathcal{V}$). Knowing that $t \therefore \mathbb{G}$ and that $\mathbb{W}_f$ validates the axioms' visibility assumption, it is deduced that the root node of $t$ is labelled by one of the axioms $A_{\mathbb{G}}$ of $\mathbb{G}$ and that $A_{\mathbb{G}} \in \mathcal{V}$ (hence the uniqueness of the produced tree). Since the projection operation preserves the form of productions, it is concluded that $t_{\mathcal{V}}=\pi_{\mathcal{V}} \left(t\right)$ is an artifact.
\end{proof}

\subsection{The GMWf Projection Algorithm}
\subsubsection{The Algorithm}
The goal of this algorithm is to derive by projection of a given GMWf $\mathbb{G}=\left(\mathcal{S},\mathcal{P},\mathcal{A}\right)$ according to a view $\mathcal{V}$, a local GMWf $\mathbb{G}_{\mathcal{V}} = \left(\mathcal{S}_{\mathcal{V}},\mathcal{P}_{\mathcal{V}}, \mathcal{A}_{\mathcal{V}}\right)$ (we note $\mathbb{G}_{\mathcal{V}} = \Pi_{\mathcal{V}}\left(\mathbb{G} \right)$). The proposed algorithm (algorithm \ref{algo:gmwf-projection}) generates the set of target artifacts\footnote{This generation is necessary because each peer is only configured using the GMWf $\mathbb{G}$ and therefore does not possess all its target artifacts, even though the designer produced $\mathbb{G}$ using these artifacts.} denoted by $\mathbb{G}$ then, it simply project each target artifact according to the view $\mathcal{V}$, then gather the productions in the obtained partial replicas while removing the duplicates.

\begin{algorithm}
	\small
	\caption{Algorithm to project a given GMWf according to a given view.}
	\label{algo:gmwf-projection}
	\begin{mdframed}[style=MyFrame]
		
		\noindent\textbf{1.}$~$ First of all, it is necessary to generate all the target artifacts denoted by $\mathbb{G}$ (see note (1) below); 
		we thus obtain a set $arts_{\mathbb{G}}=\left\{t_1,\ldots,t_n\right\}$;
		
		\noindent\textbf{2.}$~$ Then, each of the target artifacts must be projected according to $\mathcal{V}$. We thus obtain a set $arts_{\mathbb{G}_{\mathcal{V}}} = \left\{t_{\mathcal{V}_1},\ldots,t_{\mathcal{V}_m}\right\}$ (with $m \leq n$ because there may be duplicates; 
		in this case, only one copy is kept) of artifacts partial replicas;
		
		\noindent\textbf{3.}$~$ Then, collect the different (re)structuring symbols appearing in artifacts of $arts_{\mathbb{G}_{\mathcal{V}}}$, making sure to remove duplicates (see note (2) below) 
		and to accordingly update the artifacts and the set $arts_{\mathbb{G}_{\mathcal{V}}}$. We thus obtain a set $\mathcal{S}_{\mathcal{V}_{Struc}}$ of symbols and a final set $arts_{\mathbb{G}_{\mathcal{V}}} = \left\{t_{\mathcal{V}_1},\ldots,t_{\mathcal{V}_l}\right\}$ (with $l \leq m$) of artifacts. These are exactly the only ones that must be conform to the searched GMWf $\mathbb{G}_{\mathcal{V}}$. So we call them, \textit{local target artifacts for the view $\mathcal{V}$};
		
		\noindent\textbf{4.}$~$ At this stage, it is time to collect all the productions that made it possible to build each of the \textit{local target artifacts for the view $\mathcal{V}$}. We obtain a set $\mathcal{P}_{\mathcal{V}}$ of distinct productions.\\
		\textbf{The searched local GMWf $\mathbb{G}_{\mathcal{V}} = \left(\mathcal{S}_{\mathcal{V}},\mathcal{P}_{\mathcal{V}}, \mathcal{A}_{\mathcal{V}}\right)$ is such as}:
		
		$~~$\textbf{a.}$~$ its set of symbols is $\mathcal{S}_{\mathcal{V}} = \mathcal{V} \cup \mathcal{S}_{\mathcal{V}_{Struc}}$;
		
		$~~$\textbf{b.}$~$ its set of productions is $\mathcal{P}_{\mathcal{V}}$;
		
		$~~$\textbf{c.}$~$ its axioms are in $\mathcal{A}_{\mathcal{V}} = \mathcal{A}$
		
		~
		
		\noindent\textit{\textbf{Note (1):}$~$ To generate all the target artifacts denoted by a GMWf $\mathbb{G}=\left(\mathcal{S},\mathcal{P},\mathcal{A}\right)$, one just has to use the set of productions to generate the set of artifacts having one of the axiom $A_{\mathbb{G}}\in \mathcal{A}$ as root. In fact, for each axiom $A_{\mathbb{G}}$, it should be considered that every $A_{\mathbb{G}}$-production $P=\left(A_{\mathbb{G}},X_1\cdots X_n\right)$ induces artifacts $\left\{t_1, \ldots, t_m\right\}$ such as: the root node of each $t_i$ is labelled $A_{\mathbb{G}}$ and has as its sons, a set of artifacts $\left\{t_{i_1},\ldots,t_{i_n}\right\}$, part of the Cartesian product of the sets of artifacts generated when considering each symbol $X_1,\cdots, X_n$ as root node.}
		
		\noindent\textit{\textbf{Note (2):}$~$ In this case, two (re)structuring symbols are identical if for all their appearances in nodes of the different artifacts of $arts_{\mathbb{G}_{\mathcal{V}}}$, they induce the same local scheduling.}
	\end{mdframed}
\end{algorithm}

Figure \ref{fig:gmwf-projection} illustrates the research of a local model $\mathbb{G}_{\mathcal{V}_{EC}}$ such as $\mathbb{G}_{\mathcal{V}_{EC}} = \Pi_{\mathcal{V}_{EC}}\left(\mathbb{G}\right)$ with $\mathcal{V}_{EC}=\mathcal{A}_{EC(r)}=\{A, B, C, D, H1, H2, I1, I2, F\}$. Target artifacts generated from $\mathbb{G}$ (fig. \ref{fig:gmwf-projection}(b)) are projected to obtain two \textit{local target artifacts for the view $\mathcal{V}_{EC}$} (fig. \ref{fig:gmwf-projection}(c)). 
From the local target artifacts thus obtained, the searched GMWf is produced (fig. \ref{fig:gmwf-projection}(d)).
\begin{figure}[ht!]
	\noindent
	\makebox[\textwidth]{\includegraphics[scale=0.24]{./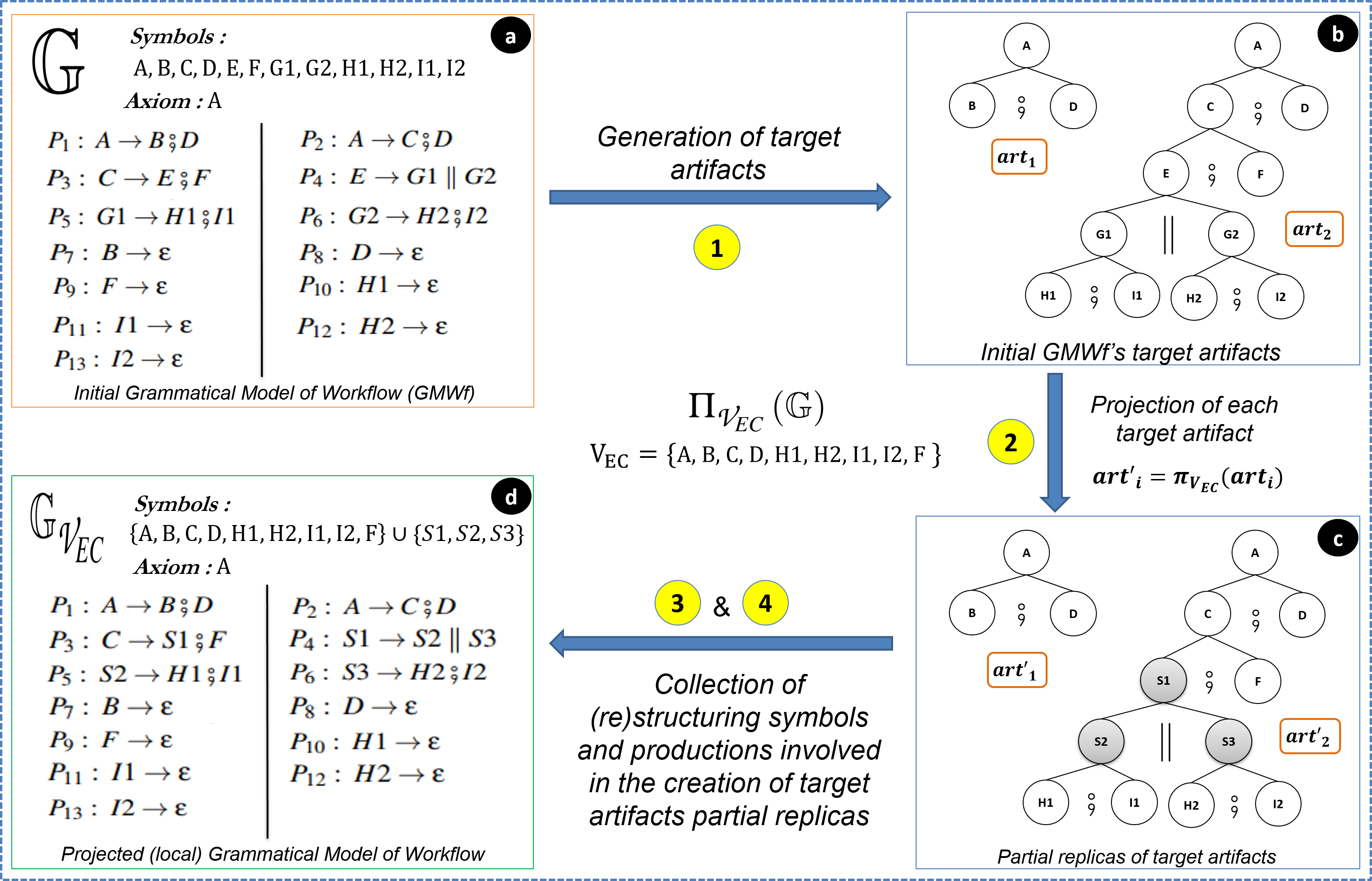}}
	\caption{Example of projection of a GMWf according to a given view.}
	\label{fig:gmwf-projection}
\end{figure}

The GMWf projection algorithm presented here only works for GMWf that do not allow recursive symbols\footnote{It is only in this context that all the target artifacts can be enumerated.}. We therefore assume that:
\begin{displayquote}
	\textit{For the execution model presented in this paper, the manipulated GMAWfP are those whose GMWf do not contain recursive symbols (\textbf{non-recursive GMWf assumption})}.
\end{displayquote} 
With this assumption, it is no longer possible to express iterative routing between process tasks (in the general case); except in cases where the maximum number of iterations is known in advance.

\subsubsection{Some Properties of the GMWf Projection Algorithm}
\begin{proposition}
	\label{propositionStabiliteProjGMWf}
	For all GMAWfP $\mathbb{W}_f=\left(\mathbb{G}, \mathcal{L}_{P_k}, \mathcal{L}_{\mathcal{A}_k} \right)$ verifying the axioms' visibility and the non-recursivity of GMWf assumptions, the projection of its GMWf $\mathbb{G}=\left(\mathcal{S},\mathcal{P},\mathcal{A}\right)$ according to a given view $\mathcal{V}$, is a GMWf $\mathbb{G}_{\mathcal{V}} = \Pi_{\mathcal{V}}\left(\mathbb{G} \right)$ for a GMAWfP $\mathbb{W}_{f_{\mathcal{V}}}$ verifying the assumptions of axiom visibility and non-recursivity of GMWf (stability property of GMWf through the usage of $\Pi$).
\end{proposition}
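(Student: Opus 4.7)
The plan is to verify, one by one, the three conditions of Definition~\ref{defGMWf1} for $\mathbb{G}_{\mathcal{V}}$, and then exhibit a GMAWfP $\mathbb{W}_{f_{\mathcal{V}}}$ containing $\mathbb{G}_{\mathcal{V}}$ for which both hypotheses hold. I would split the argument into four steps, which correspond cleanly to the four things to establish.

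First, I would argue that $\mathbb{G}_{\mathcal{V}}=(\mathcal{S}_{\mathcal{V}},\mathcal{P}_{\mathcal{V}},\mathcal{A}_{\mathcal{V}})$ is well-defined and finite. The non-recursivity assumption on $\mathbb{G}$ guarantees that the set $arts_{\mathbb{G}}$ enumerated in step~1 of Algorithm~\ref{algo:gmwf-projection} is finite and that every target artifact in it has finite size. Since $\pi_{\mathcal{V}}$ applied to a finite artifact yields a finite artifact (Proposition~\ref{propositionStabiliteProjArt} ensures a single tree; its size is bounded by that of the source), the set $arts_{\mathbb{G}_{\mathcal{V}}}$ and the set of (re)structuring symbols $\mathcal{S}_{\mathcal{V}_{Struc}}$ are finite, hence so are $\mathcal{S}_{\mathcal{V}}$ and $\mathcal{P}_{\mathcal{V}}$.

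Second, I would check that every production of $\mathcal{P}_{\mathcal{V}}$ has one of the two admissible forms $X_0\rightarrow X_1\fatsemi\cdots\fatsemi X_n$ or $X_0\rightarrow X_1\parallel\cdots\parallel X_n$. This is exactly what Algorithm~\ref{algo:artifact-projection} enforces: whenever a projected sub-artifact would mix sequential and parallel children under a single parent, a fresh (re)structuring symbol $S_i$ is inserted, so that each decomposition in a local target artifact is homogeneous. Thus each production read off a local target artifact has the required shape. The inclusion $\mathcal{A}_{\mathcal{V}}=\mathcal{A}\subseteq\mathcal{V}\subseteq\mathcal{S}_{\mathcal{V}}$ follows from the axioms' visibility assumption, so $\mathbb{G}_{\mathcal{V}}$ meets Definition~\ref{defGMWf1}.

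Third, I would build $\mathbb{W}_{f_{\mathcal{V}}}=(\mathbb{G}_{\mathcal{V}},\mathcal{L}_{P_k},\mathcal{L}_{\mathcal{A}_k}^{\mathcal{V}})$ by keeping the same list of actors and defining, for each actor $A_j$, the restricted accreditation $\mathcal{A}_{A_j}^{\mathcal{V}}=(\mathcal{A}_{A_j(r)}\cap\mathcal{S}_{\mathcal{V}}\cup\mathcal{S}_{\mathcal{V}_{Struc}},\,\mathcal{A}_{A_j(w)}\cap\mathcal{S}_{\mathcal{V}},\,\mathcal{A}_{A_j(x)}\cap\mathcal{S}_{\mathcal{V}})$, where the (re)structuring symbols are made readable and writable by the local actor (as explained right after Algorithm~\ref{algo:artifact-projection}). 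Because $\mathcal{A}_{\mathcal{V}}=\mathcal{A}$ and every actor of $\mathbb{W}_f$ is already accredited in reading on every symbol of $\mathcal{A}$, the axioms' visibility assumption transfers verbatim to $\mathbb{W}_{f_{\mathcal{V}}}$.

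Finally, I would prove non-recursivity of $\mathbb{G}_{\mathcal{V}}$; this is the step I expect to be the main obstacle because the newly introduced (re)structuring symbols must be handled carefully. The argument I would use is a descent one: define, on the symbols of $\mathbb{G}$, the strict partial order $Y\prec X$ meaning that $Y$ occurs as a proper descendant of some $X$-labelled node in some target artifact of $\mathbb{G}$; by the non-recursivity of $\mathbb{G}$, $\prec$ is indeed strict and well-founded (every target artifact is finite and there are finitely many of them). Then extend $\prec$ to $\mathcal{S}_{\mathcal{V}}$ by placing each (re)structuring symbol $S_i$ strictly between the visible ancestor and visible descendants it was introduced to separate. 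Step~2(b) of Algorithm~\ref{algo:artifact-projection} shows that every production of $\mathcal{P}_{\mathcal{V}}$ strictly decreases this extended order from its left-hand side to each symbol of its right-hand side, since the projection preserves the father--son hierarchy and restructuring symbols are inserted only \emph{between} genuinely hierarchical positions. A cycle in $\mathbb{G}_{\mathcal{V}}$ would therefore yield a cycle in $\prec$, contradicting non-recursivity of $\mathbb{G}$. Hence $\mathbb{G}_{\mathcal{V}}$ is itself non-recursive, which concludes the proof.
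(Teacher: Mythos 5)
Your proof is correct and follows essentially the same route as the paper: enumerate the (finitely many, by non-recursivity) target artifacts of $\mathbb{G}$, project them (single artifacts rooted at axioms, by Proposition~\ref{propositionStabiliteProjArt} and axioms' visibility), read off productions of the two admissible forms, and extend the accreditations with the (re)structuring symbols to obtain $\mathbb{W}_{f_{\mathcal{V}}}$. Your only real departure is the explicit well-founded descent order used to establish non-recursivity of $\mathbb{G}_{\mathcal{V}}$, which is a more careful justification of the step the paper disposes of by simply asserting that each freshly introduced (re)structuring symbol is used once and is therefore not recursive.
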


\begin{proof}
	Let's show that $\mathbb{G}_{\mathcal{V}} = \Pi_{\mathcal{V}}\left(\mathbb{G} \right)$ is a GMWf for a new GMAWfP $\mathbb{W}_{f_{\mathcal{V}}}=\left(\mathbb{G}_{\mathcal{V}}, \mathcal{L}_{P_k}, \mathcal{L}_{\mathcal{A}_{\mathcal{V}_k}} \right)$ that verifies the assumptions of axioms' visibility and non-recursivity of GMWf. 
	As $\mathbb{W}_f=\left(\mathbb{G}, \mathcal{L}_{P_k}, \mathcal{L}_{\mathcal{A}_k} \right)$ validates the non-recursivity of GMWf assumption, the set of target artifacts ($arts_{\mathbb{G}}=\left\{t_1,\ldots,t_n\right\}$) that it denotes is finite and can therefore be fully enumerated. Knowing further that $\mathbb{W}_f$ validates the axioms' visibility assumption, it is deduced that the set $arts_{\mathbb{G}_{\mathcal{V}}} = \left\{t_{\mathcal{V}_1}=\pi_{\mathcal{V}}\left(t_1\right), \ldots,t_{\mathcal{V}_n}=\pi_{\mathcal{V}}\left(t_n\right)\right\}$ is finite and the root node of each artifact $t_{\mathcal{V}_i}$ is associated with an axiom $A_{\mathbb{G}} \in \mathcal{A}$ (see proposition \ref{propositionStabiliteProjArt}). $\mathbb{G}_{\mathcal{V}}$ being built from the set $arts_{\mathbb{G}_{\mathcal{V}}}$, its axioms $\mathcal{A}_{\mathcal{V}}=\mathcal{A}$ are visible to all actors and its productions are only of the two forms retained for GMWf. In addition, each new (re)structuring symbol ($S \in \mathcal{S}_{\mathcal{V}_{Struc}}$)) is created and used only once to replace a symbol that is not visible and not recursive (by assumption) when projecting artifacts of $arts_{\mathbb{G}}$. The new symbols are therefore not recursive. By replacing in $\mathcal{L}_{\mathcal{A}_k}$ the view $\mathcal{V}$ by $\mathcal{V} \cup \mathcal{S}_{\mathcal{V}_{Struc}}$, one obtains a new set $\mathcal{L}_{\mathcal{A}_{\mathcal{V}_k}}$ of accreditations for a new GMAWfP $\mathbb{W}_{f_{\mathcal{V}}}=\left(\mathbb{G}_{\mathcal{V}}, \mathcal{L}_{P_k}, \mathcal{L}_{\mathcal{A}_{\mathcal{V}_k}} \right)$ verifying the assumptions of axioms' visibility and non-recursivity of GMWf.
\end{proof}

\begin{proposition}
	\label{propositionCoherenceArtefact}
	For all GMAWfP $\mathbb{W}_f=\left(\mathbb{G}, \mathcal{L}_{P_k}, \mathcal{L}_{\mathcal{A}_k} \right)$ verifying the axioms' visibility and the non-recursivity of GMWf assumptions, the projection of an artifact $t$ which is conform to the GMWf $\mathbb{G}$ according to a given view $\mathcal{V}$, is an artifact which is conform to the projection of $\mathbb{G}$ according to $\mathcal{V}$ $\left(\forall t, ~t \therefore \mathbb{G} \Rightarrow \pi_{\mathcal{V}}\left(t\right) \therefore \Pi_{\mathcal{V}}\left(\mathbb{G} \right)\right)$.
\end{proposition}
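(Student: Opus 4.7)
My plan is to reduce $\pi_{\mathcal{V}}(t) \therefore \Pi_{\mathcal{V}}(\mathbb{G})$ to a property of the construction of $\Pi_{\mathcal{V}}(\mathbb{G})$ by embedding $t$ into a complete target artifact. I would start by applying Propositions \ref{propositionStabiliteProjArt} and \ref{propositionStabiliteProjGMWf}, which guarantee respectively that $\pi_{\mathcal{V}}(t)$ is a single well-formed artifact and that $\Pi_{\mathcal{V}}(\mathbb{G}) = (\mathcal{S}_{\mathcal{V}}, \mathcal{P}_{\mathcal{V}}, \mathcal{A}_{\mathcal{V}})$ is indeed a GMWf. The axioms' visibility assumption then forces the root of $t$ to be some axiom $A_{\mathbb{G}} \in \mathcal{A} = \mathcal{A}_{\mathcal{V}}$, which is therefore preserved by $\pi_{\mathcal{V}}$. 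After these preliminaries, two things remain to be shown: every hierarchical decomposition occurring in $\pi_{\mathcal{V}}(t)$ corresponds to a production of $\mathcal{P}_{\mathcal{V}}$, and every label occurring in $\pi_{\mathcal{V}}(t)$ belongs to $\mathcal{S}_{\mathcal{V}} = \mathcal{V} \cup \mathcal{S}_{\mathcal{V}_{Struc}}$.

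Next, using the non-recursivity of GMWf assumption, I would iteratively expand every bud of $t$ by productions of $\mathbb{G}$ to obtain a complete target artifact $t^{*} \in arts_{\mathbb{G}}$ containing $t$ as a prefix subtree. By construction of $\Pi_{\mathcal{V}}(\mathbb{G})$ (step~2 of Algorithm \ref{algo:gmwf-projection}), $\pi_{\mathcal{V}}(t^{*})$ is one of the local target artifacts in $arts_{\mathbb{G}_{\mathcal{V}}}$ used to define $\mathcal{P}_{\mathcal{V}}$ and $\mathcal{S}_{\mathcal{V}_{Struc}}$. Hence every hierarchical decomposition of $\pi_{\mathcal{V}}(t^{*})$ already lies in $\mathcal{P}_{\mathcal{V}}$ and every label occurring in it lies in $\mathcal{S}_{\mathcal{V}}$; the task reduces to transferring this property from $\pi_{\mathcal{V}}(t^{*})$ to $\pi_{\mathcal{V}}(t)$.

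I would perform this transfer by a structural induction on $t$, leveraging the locality of Algorithm \ref{algo:artifact-projection}: the output produced at a node $n$ depends only on the label of $n$, on the ordering type (sequential or parallel) of the production applied at $n$, and on the cardinalities and shapes of the projections of its immediate sub-artifacts. Thanks to Note~(2) of Algorithm \ref{algo:gmwf-projection}, which identifies a (re)structuring symbol by the local scheduling it induces, any such symbol produced while projecting $t$ is identified with one already introduced when projecting some target artifact of $\mathbb{G}$, and therefore belongs to $\mathcal{S}_{\mathcal{V}_{Struc}}$. Combining this with the inductive hypothesis applied to each sub-artifact then yields the desired conformance.

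The main obstacle I foresee lies in the treatment of invisible buds: an invisible bud of $t$ projects to the empty forest, whereas its full expansion in $t^{*}$ may project to a non-empty forest, possibly producing at the parent a decomposition different from the one seen in $\pi_{\mathcal{V}}(t)$. To close this gap I would choose, for each invisible bud of $t$, a specific completion and, when necessary, complement it with an additional completion witnessing exactly the local decomposition observed in $\pi_{\mathcal{V}}(t)$; since $\mathcal{P}_{\mathcal{V}}$ aggregates the productions gathered from all local target artifacts, this collection of completions will certify that every hierarchical decomposition of $\pi_{\mathcal{V}}(t)$ already lies in $\mathcal{P}_{\mathcal{V}}$. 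This bud case-analysis is the delicate technical core of the argument.
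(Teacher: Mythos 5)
The skeleton of your argument coincides with the paper's proof, but the paper's is much shorter because it reads the hypothesis ``$t \therefore \mathbb{G}$'' strictly as ``$t$ is one of the finitely many complete target artifacts'': under non-recursivity it asserts $t \therefore \mathbb{G} \Leftrightarrow t \in arts_{\mathbb{G}}$, so no completion step is needed at all. From there the paper argues exactly as in your second paragraph: by step~2 of Algorithm~\ref{algo:gmwf-projection}, $\pi_{\mathcal{V}}(t)$ is (up to renaming of (re)structuring symbols) one of the local target artifacts in $arts_{\mathbb{G}_{\mathcal{V}}}$, hence every production used to build it is in $\mathcal{P}_{\mathcal{V}}$, and by Proposition~\ref{propositionStabiliteProjArt} and the axioms' visibility assumption its root is an axiom of $\mathcal{A}_{\mathcal{V}}=\mathcal{A}$. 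So if you adopt the paper's reading, your first two paragraphs already constitute the proof and the structural induction of paragraph three is unnecessary.

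The genuine problem is the extra generality you take on and then do not close: artifacts containing buds. Your worry about invisible buds is well founded, but the proposed repair --- ``complement it with an additional completion witnessing exactly the local decomposition observed in $\pi_{\mathcal{V}}(t)$'' --- cannot work in general, because no completion need witness that decomposition. Concretely, in the running example take $t$ in which $C$ has been developed by $C \rightarrow E \fatsemi F$ so that $E$ is an (invisible, for $EC$) bud: Algorithm~\ref{algo:artifact-projection} erases the bud $E_{\omega}$ outright, so $\pi_{\mathcal{V}_{EC}}(t)$ exhibits the decomposition $C \rightarrow F$, whereas every completion of $E$ projects to a non-empty forest ($H1, I1, \dots$), so every local target artifact exhibits $C \rightarrow S1 \fatsemi F$; the decomposition $C \rightarrow F$ is simply not in $\mathcal{P}_{\mathcal{V}_{EC}}$. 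Hence, with conformance understood node-by-node as you construe it, the statement is false for artifacts with invisible buds, and no choice of completions can certify it; partial artifacts are handled in the paper only later, through the prefix relation (Proposition~\ref{propositionAtLeastOneGuide}), not through Proposition~\ref{propositionCoherenceArtefact} itself. You should either restrict your proof to complete target artifacts (making it essentially the paper's) or restate what ``conformance'' means for artifacts with buds before attempting the inductive transfer.
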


\begin{proof}
	Knowing that the considered GMAWfP $\mathbb{W}_f=\left(\mathbb{G}, \mathcal{L}_{P_k}, \mathcal{L}_{\mathcal{A}_k} \right)$ verifies the axioms' visibility and the non-recursivity of GMWf assumptions, it is deduced that the set of its target artifacts $arts_{\mathbb{G}}$ (those who helped to build its GMWf $\mathbb{G}$) is finite and any artifact that is conform to its GMWf $\mathbb{G}$ is a target artifact $\left( \forall t, ~t \therefore \mathbb{G} \Leftrightarrow t \in arts_{\mathbb{G}} \right)$. Therefore, considering a given artifact $t$ such that $t$ is conform to $\mathbb{G}$ ($t \therefore \mathbb{G}$), one knows that it is a target artifact ($t \in arts_{\mathbb{G}}$) and its projection according to a given view $\mathcal{V}$ produces a single artifact $t_{\mathcal{V}}=\pi_{\mathcal{V}}\left(t\right)$ (see "stability property of $\pi$", proposition \ref{propositionStabiliteProjArt}) such as $t$ and $t_{\mathcal{V}}$ have the same root (one of the axioms $A_{\mathbb{G}} \in \mathcal{A}$ of $\mathbb{G}$). Since $t$ is a target artifact, its projection $t_{\mathcal{V}}$ (through the renaming of some potential (re)structuring symbols) is part of the set $arts_{\mathbb{G}_{\mathcal{V}}}$ of artifacts that have generated $\mathbb{G}_{\mathcal{V}} = \Pi_{\mathcal{V}}\left(\mathbb{G} \right)$ by applying the projection principle described in the algorithm \ref{algo:gmwf-projection}. Therefore, the productions involved in the construction of $t_{\mathcal{V}}$ are all included in the set of productions of the GMWf $\mathbb{G}_{\mathcal{V}} = \Pi_{\mathcal{V}}\left(\mathbb{G} \right)$. As the set of axioms of $\mathbb{G}_{\mathcal{V}}$ is $\mathcal{A}_{\mathcal{V}} = \mathcal{A}$, it is deduced that $A_{\mathbb{G}} \in \mathcal{A}_{\mathcal{V}}$ and concluded that $t_{\mathcal{V}} \therefore \mathbb{G}_{\mathcal{V}}$.
\end{proof}

\begin{proposition}
	\label{propositionReciproqueCoherenceArtefact}
	Consider a GMAWfP $\mathbb{W}_f=\left(\mathbb{G}, \mathcal{L}_{P_k}, \mathcal{L}_{\mathcal{A}_k} \right)$ verifying the axioms' visibility and the non-recursivity assumptions. For all artifact $t_{\mathcal{V}}$ which is conform to $\Pi_{\mathcal{V}}\left(\mathbb{G} \right)$, it exists at least one artifact $t$ which is conform to $\mathbb{G}$ such that $t_{\mathcal{V}}=\pi_{\mathcal{V}}\left(t\right)$ $\left(\forall t_{\mathcal{V}}, ~t_{\mathcal{V}} \therefore \Pi_{\mathcal{V}}\left(\mathbb{G} \right) \Rightarrow \exists t, ~t \therefore \mathbb{G} ~and~ t_{\mathcal{V}}=\pi_{\mathcal{V}}\left(t\right) \right)$.
\end{proposition}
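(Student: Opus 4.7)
The plan is to read the statement as a converse of Proposition \ref{propositionCoherenceArtefact} and to settle it by directly exhibiting a preimage of $t_{\mathcal{V}}$ inside the (finite, fully enumerable) set $arts_{\mathbb{G}}$ of target artifacts that was used in Algorithm \ref{algo:gmwf-projection} to build $\mathbb{G}_{\mathcal{V}}=\Pi_{\mathcal{V}}(\mathbb{G})$. The guiding idea is that, under the two standing assumptions, the artifacts conforming to any GMWf in this framework are exactly its target artifacts, and Algorithm \ref{algo:gmwf-projection} is defined in such a way that the target artifacts of $\mathbb{G}_{\mathcal{V}}$ are by construction the projections of the target artifacts of $\mathbb{G}$.

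First, I would invoke the axioms' visibility and non-recursivity assumptions on $\mathbb{W}_f$ to recover, as in the proof of Proposition \ref{propositionCoherenceArtefact}, that the set $arts_{\mathbb{G}}=\{t_1,\ldots,t_n\}$ is finite and that every artifact conforming to $\mathbb{G}$ is one of the $t_i$. Then I would apply Proposition \ref{propositionStabiliteProjGMWf} to conclude that $\mathbb{G}_{\mathcal{V}}$ also belongs to a GMAWfP $\mathbb{W}_{f_{\mathcal{V}}}$ satisfying the same two assumptions, which transports the same characterization to $\mathbb{G}_{\mathcal{V}}$: its set of conforming artifacts coincides with its set of target artifacts $arts_{\mathbb{G}_{\mathcal{V}}}$.

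Next, I would bring in the concrete construction of Algorithm \ref{algo:gmwf-projection}: Step 2 defines $arts_{\mathbb{G}_{\mathcal{V}}}$ as $\{\pi_{\mathcal{V}}(t_1),\ldots,\pi_{\mathcal{V}}(t_n)\}$ (Proposition \ref{propositionStabiliteProjArt} guarantees each $\pi_{\mathcal{V}}(t_i)$ is a single artifact), and Steps 3-4 only merge duplicates and rename coincident (re)structuring symbols. Combining this with the previous paragraph, $t_{\mathcal{V}}\therefore\mathbb{G}_{\mathcal{V}}$ forces $t_{\mathcal{V}}\in arts_{\mathbb{G}_{\mathcal{V}}}$, and hence $t_{\mathcal{V}}=\pi_{\mathcal{V}}(t_i)$ for some index $i$; the witness $t:=t_i$ then satisfies $t\therefore\mathbb{G}$ and $t_{\mathcal{V}}=\pi_{\mathcal{V}}(t)$, as required.

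The hard part will be the characterization step, i.e.\ showing rigorously that \emph{every} artifact conforming to $\mathbb{G}_{\mathcal{V}}$ actually lies in $arts_{\mathbb{G}_{\mathcal{V}}}$ and not merely in the (possibly larger) set of trees derivable by arbitrarily mixing productions collected in Step 4 of Algorithm \ref{algo:gmwf-projection}. To rule out such spurious derivations I would argue by structural induction on $t_{\mathcal{V}}$, using non-recursivity to guarantee termination and Note (2) of Algorithm \ref{algo:gmwf-projection} (identification of (re)structuring symbols only when they induce the same local scheduling) to ensure that the production applied at each node can always be traced back to a specific local target artifact, and that the choices made at different nodes remain compatible and can be assembled into a single element of $arts_{\mathbb{G}_{\mathcal{V}}}$. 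Once this characterization is in hand, the existence of $t$ is immediate.
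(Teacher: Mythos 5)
Your proposal follows essentially the same route as the paper's own proof: use Proposition \ref{propositionStabiliteProjGMWf} to transfer the two assumptions to $\mathbb{G}_{\mathcal{V}}$, identify the artifacts conforming to $\mathbb{G}_{\mathcal{V}}$ with its local target artifacts $arts_{\mathbb{G}_{\mathcal{V}}}$, observe that by construction of Algorithm \ref{algo:gmwf-projection} these are exactly the projections $\pi_{\mathcal{V}}(t_i)$ of elements of $arts_{\mathbb{G}}$, and take such a preimage $t_i$ as the witness $t$. The only difference is that you explicitly flag, and propose to close by structural induction (using non-recursivity and the symbol-identification rule of Note (2)), the characterization step ``$t_{\mathcal{V}} \therefore \mathbb{G}_{\mathcal{V}} \Rightarrow t_{\mathcal{V}} \in arts_{\mathbb{G}_{\mathcal{V}}}$'', which the paper dispatches with a brief appeal to ``similar reasoning'' as in Proposition \ref{propositionCoherenceArtefact}; your treatment of that point is, if anything, more careful than the paper's.
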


\begin{proof}
	With proposition \ref{propositionStabiliteProjGMWf} ("stability property of $\Pi$") it has been shown that the projection $\mathbb{G}_{\mathcal{V}} = \Pi_{\mathcal{V}}\left(\mathbb{G} \right)$ according to the view $\mathcal{V}$ of a GMWf $\mathbb{G}$ verifying the axioms' visibility and the non-recursivity assumptions, is a GMWf verifying the same assumptions. On this basis and using similar reasoning to that used to prove the proposition \ref{propositionCoherenceArtefact}, it's been determined that an artifact $t_{\mathcal{V}}$ that is conform to $\mathbb{G}_{\mathcal{V}}$, is one of its target artifacts (\textit{local target artifact for the view $\mathcal{V}$}): i.e, $t_{\mathcal{V}} \in arts_{\mathbb{G}_{\mathcal{V}}}$. Referring to the projection process which made it possible to obtain $\mathbb{G}_{\mathcal{V}}$, it is determined that the set $arts_{\mathbb{G}_{\mathcal{V}}}$ is exclusively made up of the projections of the set $arts_{\mathbb{G}}=\left\{t_1,\ldots,t_n\right\}$ of $\mathbb{G}$'s target artifacts. $t_{\mathcal{V}}$ is therefore the projection of at least one target artifact $t_i \in arts_{\mathbb{G}}$ of $\mathbb{G}$ $\left(t_{\mathcal{V}}=\pi_{\mathcal{V}}\left(t_i\right)\right)$. Knowing that $\forall t, ~t \therefore \mathbb{G} \Leftrightarrow t \in arts_{\mathbb{G}}$ (see proof of proposition \ref{propositionCoherenceArtefact}), it is deduced that $t_i \therefore \mathbb{G}$ and the proof of this proposition is made.
\end{proof}

By applying the GMWf projection algorithm presented above (algorithm \ref{algo:gmwf-projection}) to the running example, one obtain the productions listed in table \ref{tableau:gramLocales} for the different actors respectively. In the illustrated case here, we have considered an update of the GMWf of the peer-review process so that it validates the axioms' visibility assumption.
\begin{table}[h]
	\centering
	\caption{Local GMWf productions of all the actors involved in the peer-review process.}
	\label{tableau:gramLocales}
	\begin{tabular}[t]{|m{1.1cm}|m{10.5cm}|}
		\hline
		\textbf{Actor} & \textbf{Productions of local GMWf} \\
		\hline
		$EC$ &
		\[ 
		\begin{array}{l|l|l}
		P_{1}:\; A_{\mathbb{G}}\rightarrow A \;\; & \;\; P_{2}:\; A\rightarrow B\fatsemi D\;\; & \;\; P_{3}:\; A\rightarrow C\fatsemi D  \\
		P_{4}:\; C\rightarrow S1\fatsemi F \;\; & \;\; P_{5}:\; S1\rightarrow S2\parallel S3\;\; & \;\; P_{6}:\; S2\rightarrow H1 \fatsemi I1  \\
		P_{7}:\; S3\rightarrow H2 \fatsemi I2 \;\; & \;\; P_{8}:\; B\rightarrow \varepsilon\;\; & \;\; P_{9}:\; D\rightarrow \varepsilon \\
		P_{10}:\; F\rightarrow \varepsilon \;\; & \;\; P_{11}:\; H1\rightarrow \varepsilon\;\; & \;\; P_{12}:\; I1\rightarrow \varepsilon  \\
		P_{13}:\; H2\rightarrow \varepsilon \;\; & \;\; P_{14}:\; I2\rightarrow \varepsilon \;\; & \;\;  \\
		\end{array}
		\]
		\\
		\hline
		$AE$ & 
		\[ 
		\begin{array}{l|l|l}
		P_{1}:\; A_{\mathbb{G}}\rightarrow A \;\; & \;\; P_{2}:\; A\rightarrow C \;\; & \;\; P_{3}:\; C\rightarrow E\fatsemi F  \\
		P_{4}:\; E\rightarrow S1\parallel S2 \;\; & \;\; P_{5}:\; S1\rightarrow H1\fatsemi I1 \;\; & \;\; P_{6}:\; S2\rightarrow H2\fatsemi I2  \\
		P_{7}:\; H1\rightarrow \varepsilon \;\; & \;\; P_{8}:\; I1\rightarrow \varepsilon \;\; & \;\; P_{9}:\; H2\rightarrow \varepsilon \\
		P_{10}:\; I2\rightarrow \varepsilon \;\; & \;\; P_{11}:\; F\rightarrow \varepsilon \;\; & \;\; P_{12}:\; A_{\mathbb{G}}\rightarrow \varepsilon \\
		\end{array}
		\]
		\\
		\hline
		$R1$ & 
		\[ 
		\begin{array}{l|l|l}
		P_{1}:\; A_{\mathbb{G}}\rightarrow C \;\; & \;\; P_{2}:\; C\rightarrow G1\;\; & \;\; P_{3}:\; G1\rightarrow H1\fatsemi I1 \\
		P_{4}:\; H1\rightarrow \varepsilon \;\; & \;\; P_{5}:\; I1\rightarrow \varepsilon \;\; & \;\; P_{6}:\; A_{\mathbb{G}}\rightarrow \varepsilon \\
		\end{array}
		\]
		\\
		\hline
		$R2$ & 
		\[ 
		\begin{array}{l|l|l}
		P_{1}:\; A_{\mathbb{G}}\rightarrow C \;\; & \;\; P_{2}:\; C\rightarrow G2\;\; & \;\; P_{3}:\; G2\rightarrow H2\fatsemi I2 \\
		P_{4}:\; H2\rightarrow \varepsilon \;\; & \;\; P_{5}:\; I2\rightarrow \varepsilon \;\; & \;\; P_{6}:\; A_{\mathbb{G}}\rightarrow \varepsilon \\
		\end{array}
		\]
		\\
		\hline
	\end{tabular}
\end{table}

\subsection{The Expansion Algorithm}
\subsubsection{The Algorithm}
Consider an (global) artifact under execution $t$, and $t_{\mathcal{V}}=\pi_{\mathcal{V}}\left(t\right)$ its partial replica on the site of an actor $A_i$ whose view is $\mathcal{V}$. Consider the partial replica $t_{\mathcal{V}}^{maj} \geq t_{\mathcal{V}}$ obtained by developing some unlocked buds of $t_{\mathcal{V}}$ as a result of $A_i$'s contribution. The expansion problem consists in finding an (global) artifact under execution $t_f$, which integrates nodes of $t$ and $t_{\mathcal{V}}$. To solve this problem made difficult by the fact that $t$ and $t_{\mathcal{V}}$ are conform to two different models ($\mathbb{G}$ and $\mathbb{G}_{\mathcal{V}} = \Pi_{\mathcal{V}} \left(\mathbb{G} \right)$), we perform a three-way merge {\cite{tomMens}. We merge the artifacts $t$ and $t_{\mathcal{V}}$ using a (global) target artifact $t_g$ such that: 
\begin{enumerate}
	\item[\textbf{(a)}] $t$ is a prefix of $t_g$ ($t \leq t_g$)
	\item[\textbf{(b)}] $t_{\mathcal{V}}^{maj}$ is a prefix of the partial replica of $t_g$ according to $\mathcal{V}$ $\left(t_{\mathcal{V}}^{maj} \leq \pi_{\mathcal{V}}\left(t_g \right)\right)$
\end{enumerate}	
The proposed algorithm proceeds in two steps.

\subsubsection*{Step 1 - Search for the merging guide $t_g$}
The search of a merging guide is done by (algorithm \ref{algo:search-guide}) generating the set of target artifacts denoted by $\mathbb{G}$, then filtering this set to retain only those for which $t$ is a prefix (see the definition of the prefix relationship in algorithm \ref{algo:search-guide}) and $t_{\mathcal{V}}^{maj}$ is a prefix of their projection according to the view $\mathcal{V}$.

\begin{algorithm}
	\small
	\caption{Algorithm to search a merging guide.}
	\label{algo:search-guide}
	\begin{mdframed}[style=MyFrame]
		\noindent\textbf{1.}$~$ First of all, we have to generate the set $arts_{\mathbb{G}}=\left\{t_1,\ldots,t_n\right\}$ of target artifacts denoted by $\mathbb{G}$;
		
		\noindent\textbf{2.}$~$ Then, we must filter this set to retain only the artifacts $t_i$ admitting $t$ as a prefix (criterion \textbf{(a)}) and whose projections according to $\mathcal{V}$ ($t_{i_{\mathcal{V}_j}}$) admit $t_{\mathcal{V}}^{maj}$ as a prefix (criterion \textbf{(b)}). It is said that an artifact $t_a$ (whose root node is $n_a=X_a[t_{a_1},\ldots,t_{a_l}]$) is a prefix of a given artifact $t_b$ (whose root node is $n_b=X_b[t_{b_1},\ldots,t_{b_m}]$) if and only if the root nodes $n_a$ and $n_b$ are of the same types (i.e $X_a=X_b$) and:
		
		$~~$\textbf{a.}$~$ The node $n_a$ is a bud or,
		
		$~~$\textbf{b.}$~$ The nodes $n_a$ and $n_b$ have the same number of sub-artifacts (i.e $l=m$), the same type of scheduling for the sub-artifacts and each sub-artifact $t_{a_i}$ of $n_a$ is a prefix of the sub-artifact $t_{b_i}$ of $n_b$.
		
		\noindent We obtain the set $guides=\left\{t_{g_1},\ldots,t_{g_k}\right\}$ of artifacts that can guide the merging;
		
		\noindent\textbf{3.}$~$ Finally, we randomly select an element $t_g$ from the set $guides$.
	\end{mdframed}
\end{algorithm}

\begin{algorithm}
	\small
	\caption{Three-way merging algorithm.}
	\label{algo:three-way-merge}
	\begin{mdframed}[style=MyFrame]
		\noindent A prefixed depth path of the three artifacts ($t$, $t_{\mathcal{V}}^{maj}$ and $t_g$) is made simultaneously until there is no longer a node to visit in $t_g$. Let $n_{t_i}$ (resp. $n_{t_{\mathcal{V}_j}^{maj}}$ and $n_{t_{g_k}}$) be the node located at the address $w_i$ (resp. $w_j$ and $w_k$) of $t$ (resp. $t_{\mathcal{V}}^{maj}$ and $t_g$) and currently being visited. If nodes $n_{t_i}$, $n_{t_{\mathcal{V}_j}^{maj}}$ and $n_{t_{g_k}}$ are such that (\textbf{processing}):
		
		\noindent\textbf{1.}$~$ $n_{t_{\mathcal{V}_j}^{maj}}$ is associated with a (re)structuring symbol (fig. \ref{fig:expansion-pattern}(d)) then: we take a step forward in the depth path of $t_{\mathcal{V}}^{maj}$ and we resume processing;
		
		\noindent\textbf{2.}$~$ $n_{t_i}$, $n_{t_{\mathcal{V}_j}^{maj}}$ and $n_{t_{g_k}}$ exist and are all associated with the same symbol $X$ (fig. \ref{fig:expansion-pattern}(a) and \ref{fig:expansion-pattern}(b)) then:
		we insert $n_{t_{\mathcal{V}_j}^{maj}}$ (it is the most up-to-date node) into $t_f$ at the address $w_k$; 
		if $n_{t_{\mathcal{V}_j}^{maj}}$ is a bud then we prune (delete sub-artifacts) $t_g$ at the address $w_k$; 
		we take a step forward in the depth path of the three artifacts and we resume processing.
		
		\noindent\textbf{3.}$~$ $n_{t_i}$, $n_{t_{\mathcal{V}_j}^{maj}}$ and $n_{t_{g_k}}$ exist and are respectively associated with symbols $X_i$, $X_j$ and $X_k$ such that $X_k \neq X_i$ and $X_k \neq X_j$ (fig. \ref{fig:expansion-pattern}(e)) then: 
		we add $n_{t_{g_k}}$ in $t_f$ at address $w_k$. This is an upstair bud; 
		we take a step forward in the depth path of $t_g$ and we resume processing.
		
		\noindent\textbf{4.}$~$ $n_{t_i}$ (resp. $n_{t_{\mathcal{V}_j}^{maj}}$) and $n_{t_{g_k}}$ exist and are associated with the same symbol $X$ (fig. \ref{fig:expansion-pattern}(c) and \ref{fig:expansion-pattern}(f)) then: 
		we insert $n_{t_i}$ (resp. $n_{t_{\mathcal{V}_j}^{maj}}$) into $t_f$ at the address $w_k$;
		if $n_{t_i}$ (resp. $n_{t_{\mathcal{V}_j}^{maj}}$) is a bud, we prune $t_g$ at the address $w_k$; 
		we take a step forward in the depth path of the artifacts $t$ (resp. $t_{\mathcal{V}}^{maj}$) and $t_g$, then we resume processing.
	\end{mdframed}
\end{algorithm}

\subsubsection*{Step 2 - Merging $t$, $t_{\mathcal{V}}^{maj}$ and $t_g$}
The problem here is to find an artifact $t_f$ that includes all the contributions already made during the workflow execution. The structure of the searched artifact $t_f$ is the same as that of $t_g$: hence the interest to use $t_g$ as a guide. The merging is carried out by the algorithm \ref{algo:three-way-merge}.  Technically, the three artifacts $t$, $t_{\mathcal{V}}^{maj}$ and $t_g$ are explored in depth simultaneously and a specific treatment is applied according to the configuration of the visited nodes: if the three nodes visited at a given time are mergeable (they are of the same type and some are updates of others) then, the most up-to-date node is retained and added to the resulting artifact; otherwise, the nodes preventing the merge are ignored (pruning).

\begin{figure}[ht!]
	\noindent
	\makebox[\textwidth]{\includegraphics[scale=0.24]{./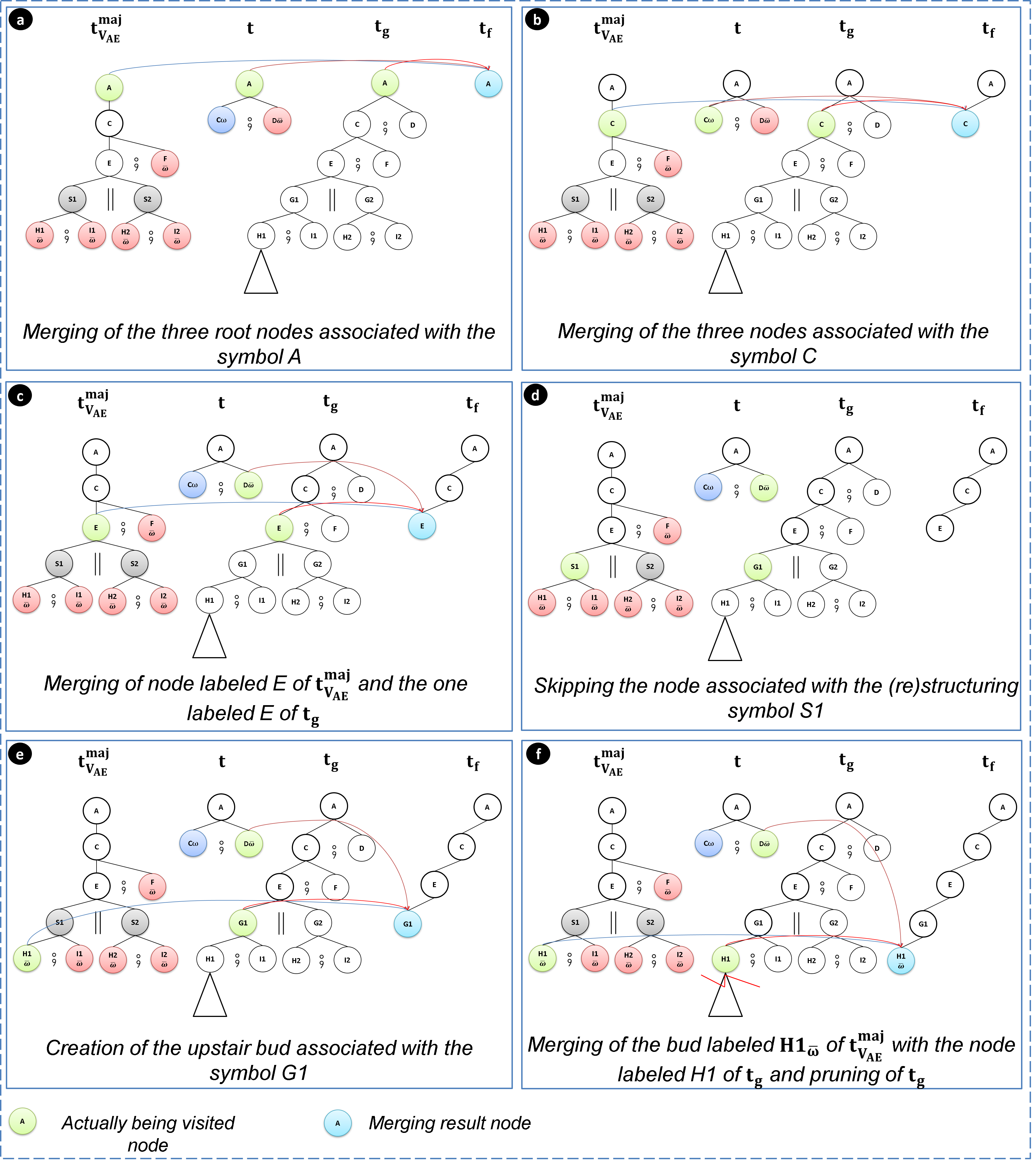}}
	\caption{Some particular cases to be managed during the expansion.}
	\label{fig:expansion-pattern}
\end{figure}

\subsubsection{Some Properties of the Expansion Algorithm}
\begin{proposition}
	\label{propositionAtLeastOneGuide}
	For any update $t_{\mathcal{V}}^{maj}$ in accordance with a GMWf $\mathbb{G}_{\mathcal{V}} = \Pi_{\mathcal{V}}\left(\mathbb{G} \right)$, of a partial replica $t_{\mathcal{V}}=\pi_{\mathcal{V}}\left(t\right)$ obtained by projecting (according to the view $\mathcal{V}$) an artifact $t$ being executed in accordance with the GMWf $\mathbb{G}$ of a GMAWfP verifying the axioms' visibility and the non-recursivity assumptions, there is at least one target artifact (the three-way merge guide) $t_g \in arts_{\mathbb{G}}$ of $\mathbb{G}$ such as:
	\begin{enumerate}
		\item[\textbf{(a)}] $t$ is a prefix of $t_g$ ($t \leq t_g$)
		\item[\textbf{(b)}] $t_{\mathcal{V}}^{maj}$ is a prefix of the partial replica of $t_g$ according to $\mathcal{V}$ $\left(t_{\mathcal{V}}^{maj} \leq \pi_{\mathcal{V}}\left(t_g \right)\right)$
	\end{enumerate}	
\end{proposition}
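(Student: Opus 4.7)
The plan is to combine the two stability propositions already established (in particular Propositions~\ref{propositionCoherenceArtefact} and \ref{propositionReciproqueCoherenceArtefact}) with an induction on the chain of local bud-expansions leading from $t_{\mathcal{V}}$ to $t_{\mathcal{V}}^{maj}$. The overall goal is to exhibit a single target artifact $t_g \in arts_{\mathbb{G}}$ that simultaneously extends $t$ (condition (a)) and whose projection extends $t_{\mathcal{V}}^{maj}$ (condition (b)). Note that if we had to satisfy only one of the two conditions, the result would be immediate: (a) alone follows from completing $t$ into a target artifact, and (b) alone is precisely Proposition~\ref{propositionReciproqueCoherenceArtefact} applied to $t_{\mathcal{V}}^{maj}$; the real content of the statement is thus the joint satisfaction of the two.

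First I would dispatch the base case. By the non-recursivity assumption on $\mathbb{G}$, any derivation that starts from $t$ and iteratively fires $\mathbb{G}$-productions on its remaining (locked or unlocked) buds terminates after finitely many steps, so there exists at least one global target artifact $t_g^{0} \in arts_{\mathbb{G}}$ with $t \leq t_g^{0}$. When $t_{\mathcal{V}}^{maj}=t_{\mathcal{V}}$, monotonicity of $\pi_{\mathcal{V}}$ gives $\pi_{\mathcal{V}}(t) \leq \pi_{\mathcal{V}}(t_g^{0})$, because Algorithm~\ref{algo:artifact-projection} erases invisible nodes and inserts (re)structuring symbols in a position-preserving way, so the prefix relation on artifacts commutes with projection; hence $t_g^{0}$ already serves as the required guide.

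Second I would carry out the inductive step. Assume the statement holds for an intermediate update $t_{\mathcal{V}}^{maj'}$ with witnessing guide $t_g' \in arts_{\mathbb{G}}$, and suppose $t_{\mathcal{V}}^{maj}$ is obtained from $t_{\mathcal{V}}^{maj'}$ by developing one unlocked bud of type $Y$ via a production $P_{\mathcal{V}} \in \mathcal{P}_{\mathcal{V}}$. By the very construction of $\mathbb{G}_{\mathcal{V}}$ (Algorithm~\ref{algo:gmwf-projection}), $P_{\mathcal{V}}$ is produced because some global target artifact $t^{\sharp} \in arts_{\mathbb{G}}$ projects to a local artifact whose expansion at a $Y$-node is exactly $P_{\mathcal{V}}$. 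Either $\pi_{\mathcal{V}}(t_g')$ already realises this expansion at the corresponding node, in which case $t_g'$ remains a valid guide, or I would form $t_g$ from $t_g'$ by substituting, at the $Y$-rooted subderivation of $t_g'$ matching the bud, the $Y$-rooted subderivation extracted from $t^{\sharp}$. Since $\mathbb{G}$ is context-free, the substitution yields another element of $arts_{\mathbb{G}}$; because the replaced subtree lies strictly below the frontier of $t$ (the bud has not yet been executed globally), the prefix relation $t \leq t_g$ is preserved; and by construction $\pi_{\mathcal{V}}(t_g)$ now extends $t_{\mathcal{V}}^{maj}$.

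The main obstacle will be the substitution/compatibility lemma at the heart of the inductive step. I would have to prove, by a careful analysis of Algorithm~\ref{algo:artifact-projection}, that every visible bud of $t_{\mathcal{V}}^{maj'}$ corresponds in $t_g'$ to a node lying strictly beyond the frontier of $t$, so that the subtree substitution cannot disturb the part of $t_g'$ that already realises $t$. This amounts to showing that visible buds of $t_{\mathcal{V}}^{maj'}$ are inherited either from buds of $t$ (possibly traversed through a chain of (re)structuring symbols introduced by projection) or from buds created during the local completion from $t_{\mathcal{V}}$ to $t_{\mathcal{V}}^{maj'}$, and that in both cases the portions of $t_g'$ rooted at these buds belong to the part that was chosen when completing $t$ into $t_g^{0}$ rather than to $t$ itself. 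Together with the context-freeness of $\mathbb{G}$ and the stability result of Proposition~\ref{propositionStabiliteProjGMWf}, this compatibility argument closes the induction and yields the desired guide $t_g$.
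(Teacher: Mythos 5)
Your route (induction on the chain of local bud developments, repairing the guide at each step by a subtree substitution) is genuinely different from the paper's, but as written the inductive step has a real gap, and it sits exactly where you defer the work. The substitution you describe --- replace ``the $Y$-rooted subderivation of $t_g'$ matching the bud'' by the $Y$-rooted subderivation extracted from $t^{\sharp}$ --- is only meaningful when $Y$ is a visible symbol of the global GMWf. But the bud being developed may be typed by a (re)structuring symbol: such symbols exist only in $\mathbb{G}_{\mathcal{V}}$ (for instance, after the $AE$ fires $E \rightarrow S1 \parallel S2$ of Table~\ref{tableau:gramLocales}, $S1$ is a bud of the partial replica on which he is accredited in writing, and the statement in any case quantifies over every update conform to $\mathbb{G}_{\mathcal{V}}$). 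Neither $t_g'$ nor $t^{\sharp}$ contains any node labelled $Y$ in that case: the bud corresponds in the guide to a \emph{forest} of global subtrees hanging under invisible ancestors, so there is no single subtree to excise, and the argument ``context-freeness of $\mathbb{G}$ implies the result is again a target artifact'' does not apply as stated. Even for visible $Y$ you still owe two verifications that your announced compatibility lemma (framed only in terms of the frontier of $t$) does not cover: that the graft does not perturb $\pi_{\mathcal{V}}(t_g')$ outside the grafted region --- insertion or elision of (re)structuring symbols in Algorithm~\ref{algo:artifact-projection} depends on the relative types of parent and child productions and on how many trees a sub-projection returns, so locality of the projection must be argued, not assumed --- and that the (re)structuring symbols produced by projecting the grafted subtree are identified with those occurring in $P_{\mathcal{V}}$ in the sense of note (2) of Algorithm~\ref{algo:gmwf-projection}. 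Since the key lemma is left unproved and these are precisely the points where it is delicate, the proposal does not yet establish the proposition.

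For comparison, the paper avoids the surgery altogether with a short set-level argument: the set $arts_{\mathbb{G}}^{'}$ of target artifacts admitting $t$ as a prefix is non-empty; their projections are local target artifacts admitting $t_{\mathcal{V}}$ as a prefix; $t_{\mathcal{V}}^{maj}$, being a $\mathbb{G}_{\mathcal{V}}$-conform development of $t_{\mathcal{V}}$, is a prefix of a non-empty subset of those; and Propositions~\ref{propositionStabiliteProjGMWf}, \ref{propositionCoherenceArtefact} and \ref{propositionReciproqueCoherenceArtefact} are invoked to conclude that such a local target artifact is the projection of some member of $arts_{\mathbb{G}}^{'}$, which is then simultaneously an (a)- and (b)-witness. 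Note that the crux is the same in both arguments --- every local target artifact extending $\pi_{\mathcal{V}}(t)$ must be the projection of a global target artifact extending $t$ --- the paper asserts it from its earlier propositions, while you attempt to prove it constructively; if you can actually discharge the compatibility lemma (including the (re)structuring-symbol cases above), your version would be the more informative one, since it exhibits how a guide can be maintained incrementally across local edits.
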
	

\begin{proof}
	Thanks to the proposals \ref{propositionStabiliteProjGMWf}, \ref{propositionCoherenceArtefact} and the artifact editing model used here\footnote{An artifact is developed at the level of its leaves using the productions of the GMWf to which it conforms.}, it is established that since the artifact $t$ being executed in accordance with $\mathbb{G}$ is a prefix of a non-empty set of $\mathbb{G}$'s target artifacts $arts_{\mathbb{G}}^{'} = \left\{t_{1}^{'},\ldots, t_{n}^{'}\right\}$ ($\forall 1 \leq i \leq n, ~t \leq t_{i}^{'}$), its projection $t_{\mathcal{V}}$ according to the view $\mathcal{V}$ is a prefix of a non-empty set $arts_{\mathbb{G}_{\mathcal{V}}}^{'} = \left\{t_{{\mathcal{V}}_{1}}^{'},\ldots, t_{{\mathcal{V}}_{m}}^{'}\right\}$ of $\mathbb{G}_{\mathcal{V}} = \Pi_{\mathcal{V}}\left(\mathbb{G} \right)$'s local target artifacts for the said view ($\forall 1 \leq j \leq m, ~t_{\mathcal{V}} \leq t_{{\mathcal{V}}_{j}}^{'}$): elements of $arts_{\mathbb{G}}^{'}$ are potential merging guides candidates that all verify the property \textbf{(a)}. In addition, using the propositions \ref{propositionStabiliteProjGMWf} and \ref{propositionReciproqueCoherenceArtefact}, it is established that each element of $arts_{\mathbb{G}_{\mathcal{V}}}^{'}$ is the projection of at least one element of $arts_{\mathbb{G}}^{'}$ according to the view $\mathcal{V}$ \textbf{(1)}. Given that $t_{\mathcal{V}}^{maj}$ is obtained by developing buds of $t_{\mathcal{V}}$ in accordance with $\mathbb{G}_{\mathcal{V}}$, it is inferred that $t_{\mathcal{V}}^{maj}$ is a prefix of a non-empty subset $arts_{\mathbb{G}_{\mathcal{V}}}^{maj} \subseteq arts_{\mathbb{G}_{\mathcal{V}}}^{'}$ of local target artifacts for the view $\mathcal{V}$ \textbf{(2)}. With the proposition \ref{propositionReciproqueCoherenceArtefact} once again, it is determined that for each artifact $t_{{\mathcal{V}}_{j}}^{'} \in arts_{\mathbb{G}_{\mathcal{V}}}^{maj}$, there is at least one artifact $t_{g_j}$ that is conform to $\mathbb{G}$ such as $t_{{\mathcal{V}}_{j}}^{'} = \pi_{\mathcal{V}}\left(t_{g_j} \right)$: this new set $arts_{\mathbb{G}}^{maj} = \left\{t_{g_1},\ldots, t_{g_k}\right\}$ is made up of potential merging guides candidates that all verify the property \textbf{(b)}. Results \textbf{(1)} and \textbf{(2)} show that $arts_{\mathbb{G}}^{maj}$ and $arts_{\mathbb{G}}^{'}$ are not disjoint. As a consequence, the set $guides= arts_{\mathbb{G}}^{maj} \cap arts_{\mathbb{G}}^{'}$ of potential merging guides that all verify both property \textbf{(a)} and \textbf{(b)} is not empty.
\end{proof}

\begin{corollary}
	\label{propositionUniqueExpansion}
	For an artifact $t$ being executed in accordance with a GMWf $\mathbb{G}$ of a GMAWfP verifying the axioms' visibility and the non-recursivity assumptions, and an update $t_{\mathcal{V}}^{maj} \geq t_{\mathcal{V}}$ of its partial replica $t_{\mathcal{V}}=\pi_{\mathcal{V}}\left(t\right)$ according to the view $\mathcal{V}$, the expansion of $t_{\mathcal{V}}^{maj}$ contains at least one artifact and the expansion-pruning algorithm presented here returns one and only one artifact.
\end{corollary}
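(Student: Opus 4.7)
The plan is to split the corollary into its two assertions and handle each one with the machinery already built in the excerpt.

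For the existence part (``the expansion of $t_{\mathcal{V}}^{maj}$ contains at least one artifact''), I would invoke Proposition~\ref{propositionAtLeastOneGuide} directly. That proposition guarantees that the set $guides$ computed in Algorithm~\ref{algo:search-guide} is non-empty, i.e.\ there exists at least one target artifact $t_g \in arts_{\mathbb{G}}$ satisfying both $t \leq t_g$ and $t_{\mathcal{V}}^{maj} \leq \pi_{\mathcal{V}}(t_g)$. I would then argue that running Algorithm~\ref{algo:three-way-merge} on the triple $(t, t_{\mathcal{V}}^{maj}, t_g)$ produces a well-defined tree $t_f$ whose structure is precisely that of $t_g$ (up to the pruning rules), and whose visible content integrates $t$ and $t_{\mathcal{V}}^{maj}$. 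Thanks to the prefix conditions $t \leq t_g$ and $t_{\mathcal{V}}^{maj} \leq \pi_{\mathcal{V}}(t_g)$, every node visited during the simultaneous depth-first traversal falls into exactly one of the patterns enumerated in Figure~\ref{fig:expansion-pattern}, so the merging is never blocked and yields a single artifact conforming to $\mathbb{G}$.

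For the uniqueness part (``the expansion-pruning algorithm returns one and only one artifact''), I would emphasise that although Algorithm~\ref{algo:search-guide} contains a random choice of $t_g$ among $guides$, once this $t_g$ is fixed the rest of the procedure is entirely deterministic. Concretely, the simultaneous prefixed depth-first traversal of $t$, $t_{\mathcal{V}}^{maj}$ and $t_g$ visits nodes in a uniquely determined order, and at each step the four cases~(1)--(4) of Algorithm~\ref{algo:three-way-merge} are mutually exclusive: they are indexed by the combination of which of $n_{t_i}$, $n_{t_{\mathcal{V}_j}^{maj}}$, $n_{t_{g_k}}$ exist, whether their labels are (re)structuring symbols, and whether the labels agree. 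Hence exactly one rule fires, and the resulting artifact $t_f$ is uniquely determined by the choice of guide.

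The main obstacle I foresee is the case analysis that the four merge rules exhaust every configuration reachable during the traversal; this is where the prefix hypotheses $t \leq t_g$ and $t_{\mathcal{V}}^{maj} \leq \pi_{\mathcal{V}}(t_g)$ are essential. I would handle it by induction on the depth of $t_g$: the root of $t_g$ carries an axiom $A_{\mathbb{G}} \in \mathcal{A}$ which, by the axioms' visibility assumption, also labels the roots of $t$ and $t_{\mathcal{V}}^{maj}$, so case~(2) applies at the root; at every deeper node, one verifies that either the corresponding node in $t$ (resp.\ $t_{\mathcal{V}}^{maj}$) exists and is labelled by the same symbol as in $t_g$, or it is absent (which triggers case~(3) or case~(4)), or it is a (re)structuring symbol of $t_{\mathcal{V}}^{maj}$ (case~(1)). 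Combined with the non-recursivity assumption, which guarantees termination of the traversal, this yields a single, well-defined artifact $t_f$ and completes the proof.
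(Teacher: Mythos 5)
Your proof follows the same route as the paper: existence is obtained directly from Proposition~\ref{propositionAtLeastOneGuide} (the set of merging guides is non-empty), and the ``one and only one'' claim comes from the fact that Algorithm~\ref{algo:search-guide} selects a single guide $t_g$, after which the three-way merge of Algorithm~\ref{algo:three-way-merge} deterministically produces a single artifact. The paper's own justification is a two-line remark that leaves the exhaustiveness and determinism of the merge cases implicit, so your induction on the depth of $t_g$ is added detail on the same argument rather than a different approach.
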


This result (corollary \ref{propositionUniqueExpansion}) derives from the proof of the proposition \ref{propositionAtLeastOneGuide} (\textit{there is always at least one artifact in the expansion of $t_{\mathcal{V}}^{maj}$ under the conditions of corollary \ref{propositionUniqueExpansion}}) and from the fact that in the last instruction of the algorithm \ref{algo:search-guide}, an artifact is randomly selected an returned from a non-empty set of potential guides (\textit{only one of the expansion artifacts is used in the three-way merging}).

\subsection{A Haskell Implementation of the Algorithm presented in this Work}
You can find types and functions (coded in Haskell\footnote{Haskell: \url{https://www.haskell.org/}, visited the 01/08/2020.}) that perform the projections as described in this paper in the \url{https://github.com/MegaMaxim10/GMAWfP-Projection-Algorithms} Git\footnote{Git: \url{https://git-scm.com/}, visited the 01/08/2020.} repository. These tools have been proposed as a proof of concepts. More specifically, they include types for encoding annotations (sequential, parallel, locked, unlocked, etc.) on artifacts, their nodes and productions. There are also simple types for manipulating productions, artifacts, GMWf,... as well as the actual projection functions. All of these tools are provided in a file that can be directly interpreted using a Haskell interpreter like the Glasgow Haskell Compiler\footnote{The Glasgow Haskell Compiler: \url{https://www.haskell.org/ghc/}, visited the 01/08/2020.}. A readme and comments have been added to make it easier to get to grips with the provided implementation as shown in the screenshot in figure \ref{fig:repository}.

\begin{figure}[ht!]
	\noindent
	\makebox[\textwidth]{\includegraphics[scale=0.24]{./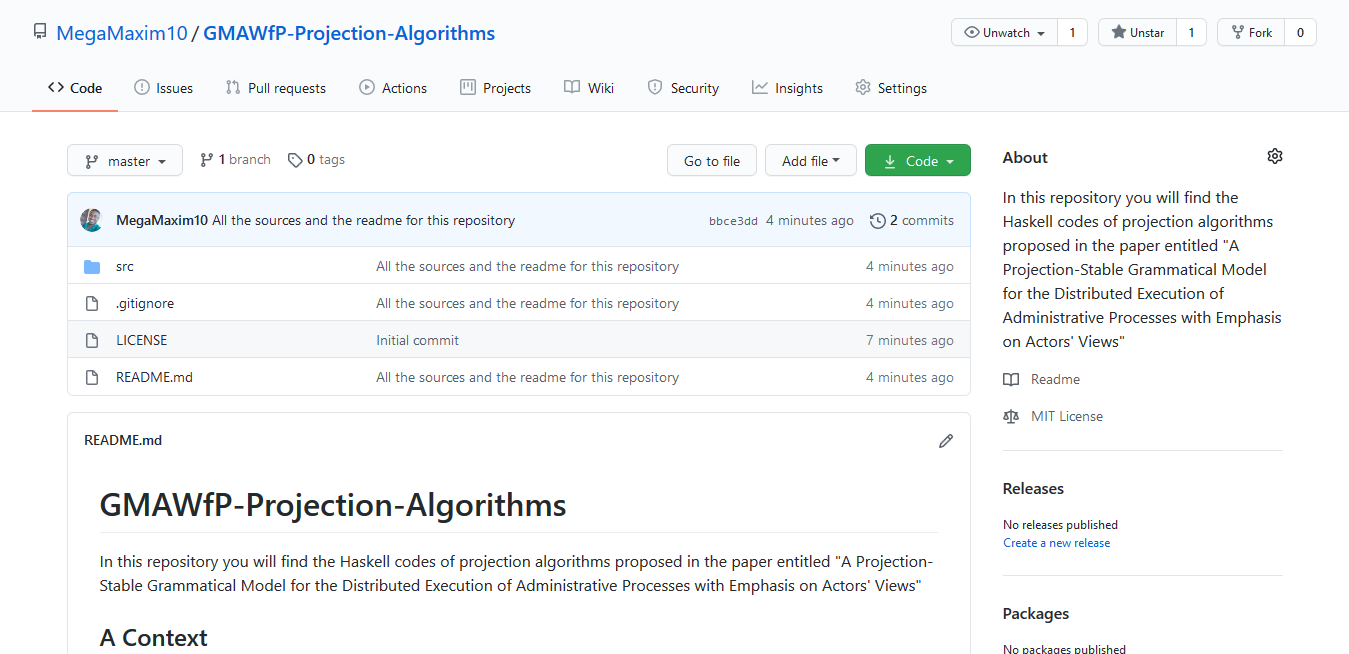}}
	\caption{A screenshot of the provided Git repository.}
	\label{fig:repository}
\end{figure}

One observation that can be quickly made by consulting the provided Haskell code is that, it is quite long. This shows that the proposed algorithms are difficult to present with common notations (in pseudo-code form or even directly in code); this is why we have opted to present them with instructions written in natural language and sprinkled with a few mathematical formulas (it was more concise and precise that way). Nevertheless, we have chosen to use Haskell (a functional language) to code our functions because, the Haskell code is generally self-descriptive (i.e. very close to semi-formal descriptions) and more compact than those written in other languages.

\section{Related Works and Discussion}
\label{Discussion}
\subsection{Projection of Trees in a Cooperative Editing Workflow}
Some works of the literature have focused on the projection of trees that conform to grammatical models in a cooperative editing workflow. We present some of them here and discuss our results as we go along.

In their structured cooperative editing model, Badouel et al. \cite{badouelTchoupeCmcs} proposed a tree projection algorithm operating in the general case: i.e. even in the case in which trees are not annotated and their roots may be invisible; the projection may thus produce a forest. The artifact projection algorithm proposed in this manuscript is a specialization of their own in the case of annotated trees, constructed using only two types of productions: it was therefore necessary to be able to add new (re)structuring symbols during the projection in order to guarantee the stability of models through this operation. Moreover, the scope of application of the algorithm we propose requires that the projection always provides a single artifact: hence the axioms' visibility assumption.

The authors of \cite{badouelTchoupeCmcs} also proposed a solution to the expansion problem. More precisely, they proposed to associate to the updated partial replica whose expansion is sought, a tree automaton with exit states \cite{tchoupeZekeng2017} generating the documents of the expansion. This tree automaton is constructed using the global grammatical model, the considered view and the updated partial replica. In our case, an additional parameter has been added: the global artifact whose partial replica has been updated. This is where the interest of performing a three-way merge comes from. Moreover, we did not need the automaton structure because we made the assumption of non-recursivity of the manipulated grammatical model. Let us also mention that our expansion is followed by a pruning to better correspond to the field of application of this paper (the decentralized execution of GMAWfP).

About our expansion-pruning algorithm, note that the choice of the merging guide is made randomly from a set of potential candidates. This consideration was made to simplify the work presented here. Indeed, if the initial grammatical model is specified without ambiguity (an ambiguity could come from an execution scenario that contains another one) then there is no problem with this consideration. However, in the presence of a grammatical model subject to design errors, the choice of a three-way merge guide must be made with caution because it determines the continuation of the execution (the scenario to follow): this choice could therefore be made by one of the actors involved in the execution of the process (maybe the process owner). This opens an interesting perspective on the verification of the specifications produced with LSAWfP.

The authors of \cite{tchoupeAtemkeng2} have proposed a grammatical model for structured cooperative editing. They also proposed an algorithm for projecting a grammar according to a view. Their algorithm proceeds by successive rewrites of the productions whose right hand side contains invisible symbols: the invisible symbols are replaced by the right hand sides of productions for which they appear on the left hand side. The rewrites are made until there are no more invisible symbols. As in our case, (re)structuring symbols are added if necessary and the initial model must not admit recursion: it therefore seems that the projection of grammatical models admitting recursive symbols is an interesting avenue of research. However, our GMWf projection algorithm is completely based on the artifact projection algorithm. Rewritings of the productions are thus implicitly realized during the projection of the artifacts.

In a more general perspective, Foster et al. \cite{foster2005combinators, foster2007combinators} proposed a solution to the view update problem in the case of tree-structured data. More specifically, they offer a domain-specific programming language in which all expressions denote bi-directional transformations on trees. In a sense, these transformations make it possible to project a so-called "concrete tree" in order to obtain a simplified view (a so-called "abstract view") of it. In the other direction, transformation operations allow to merge a modified abstract view with the original concrete tree to obtain a modified concrete tree. The algorithms proposed by Foster et al. manipulate unranked trees (i.e trees with unranked nodes) while ours only manipulate ranked trees. They are therefore not interested by documents models (grammars) which are an essential tools in our study. Let's note also that, the  concept of views they use in their work is  different from ours; it's  rather more close to the one encountered in works on databases \cite{caruccio2016synchronization, alwehaibi2018rule, meier2019nosql, horn2020language, segoufin2020projection}.

\subsection{Confidentiality using Views in BPM Approaches}
We are not aware of many studies that have looked at the consistency of views in the decentralized execution of processes using BPM technology. We hereby present a few that have mentioned the concept of views.

The studies in \cite{liu2003workflow, chebbi2006view, eshuis2008constructing, zhao2008process, yongchareon2010process, yongchareon2011artifact, yongchareon2015view} propose mechanisms for the construction of views guaranteeing the stability of the base models. Here, the views are in fact, process models that guarantee a certain degree of confidentiality. The studies in \cite{liu2003workflow, chebbi2006view, eshuis2008constructing, zhao2008process} apply in the case of process-centric workflows, while those in \cite{yongchareon2010process, yongchareon2011artifact, yongchareon2015view} apply in the case of artifact-centric workflows. The main difference between these studies and ours is that, we manipulate trees and grammatical models where they are interested in arbitrary graphs or stack automata.

In the SwinDeW \cite{junYan06} approach to decentralised workflow management, the authors propose a "know what you should know" policy to manage confidentiality. According to this policy, a workflow is partitioned (projected) into individual tasks after it is modelled completely (using any workflow language), and definition of individual tasks is then distributed to appropriate peers for storage. Unlike ours, the SwinDeW approach is not artifact-centric and the authors do not really propose projection algorithms; rather, they propose a formalism for modelling each task in order to facilitate their distribution and decentralized execution.

Hull et al. \cite{hull2009facilitating} proposed a new approach to interoperation of organizations hubs based on business artifacts. It provides a centralized point where stakeholders can access data of common interest and check the current status of an aggregate process. They proposed three kinds of access restrictions namely \textit{windows}, \textit{views} and \textit{Create-Read-Update-Delete (CRUD)}. "Windows" provide a mechanism to restrict which artifacts a stakeholder can see; "views" provide a mechanism to restrict what parts of an artifact a stakeholder can see; and the CRUD is used to restrict the ways that stakeholders can read and modify artifacts. This approach differs from ours by the fact that it is centralized and that its confidentiality policy is only interested in the artifacts and not in their models.

\section{Conclusion}
\label{Conclusion}
In this work, we have presented the LSAWfP language for the specification of administrative workflow processes using grammatical models. We then presented a decentralized and artifact-centric execution model (\textit{P2P-WfMS-View}) of the workflow processes specified using LSAWfP. Based on the principles of this model, we proposed versions of its key algorithms (\textit{algorithm for projecting an artifact}, \textit{algorithm for projecting a GMWf} and \textit{algorithm for expanding a partial replica}). The proposed algorithms are perfectly usable since we have proven the stability of our main mathematical tools when using them. We have implemented them in Haskell and tested them with very satisfactory results. 
However, in order for our algorithms to produce the expected results, we have made some assumptions. Notably the non-recursivity of GMWf assumption, which had the direct effect of limiting a little bit the expressiveness of LSAWfP. An interesting perspective of this work therefore consists in proposing other versions of the algorithms presented here, which would take up the same fundamental principles while raising the non-recursivity of GMWf assumption in order to offer more facility to the designers of GMAWfP.

\bibliographystyle{fundam}
\bibliography{bibliography}


\end{document}